\newtheorem{theorem}{Theorem}
\newtheorem{definition}{Definition}
\newtheorem{lemma}[theorem]{Lemma}
\newtheorem{proposition}[theorem]{Proposition}
\newtheorem{corollary}[theorem]{Corollary}
\newtheorem{problem}{Problem}
\newtheorem{remark}{Remark}
\title{On Characterization of Finite Geometric Distributive Lattices\thanks{This work is fully supported by the NTRO project at R. C. Bose Center for Cryptology and Security at Indian Statistical Institute, Kolkata.}}
\author{ \href{https://orcid.org/0000-0002-9110-7617}{\hspace{1mm}Pranab Basu}\\
	R. C. Bose Center for Cryptology and Security\\
	Indian Statistical Institute\\
	Kolkata 700108 \\
	\texttt{pranabbasu@alum.iisc.ac.in} \\}
\date{}
\begin{document}
\maketitle

\begin{abstract}
A Lattice is a partially ordered set where both least upper bound and greatest lower bound of any pair of elements are unique and exist within the set. K\"{o}tter and Kschischang proved that codes in the linear lattice can be used for error and erasure-correction in random networks. Codes in the linear lattice have previously been shown to be special cases of codes in modular lattices. Two well known classifications of semimodular lattices are geometric and distributive lattices. Most of the frequently used coding spaces are examples of either or both. We have identified the unique criterion which makes a geometric lattice distributive, thus characterizing all finite geometric distributive lattices. Our characterization helps to prove a conjecture regarding the maximum size of a distributive sublattice of a finite geometric lattice and identify the maximal case. The Whitney numbers of the class of geometric distributive lattices are also calculated. We present a few other applications of this unique characterization to derive certain results regarding linearity and complements in the linear lattice.
\end{abstract}

\keywords{Geometric lattices \and Distributive lattices \and Subspace codes \and Linear codes \and Complements}
\setlength{\parindent}{10ex}
\section{Introduction}
\label{S1}
Let $\mathbb{F}_q^n$ be the $n$-dimensional vector space over $\mathbb{F}_q$, the unique finite field with $q$ elements; $q$ is necessarily a prime power. The set of all subspaces of $\mathbb{F}_q^n$ is the \emph{projective space}\footnote{This terminology is not standard. In other branches of mathematics the term projective space defines the collection of all lines passing through the origin of a vector space.} $\mathbb{P}_q(n)$ which can be formally defined as
\begin{equation*}
	\mathbb{P}_q(n) := \{V : V \le \mathbb{F}_q^n\},
\end{equation*}
where $\le$ signifies the usual vector space inclusion. The collection of all subspaces in $\mathbb{P}_q(n)$ with a fixed dimension $k$ is called the \emph{Grassmannian} of dimension $k$ for all $0 \le k \le n$, and is denoted as $\mathbb{G}_q(n, k)$. In terms of notation, $\mathbb{G}_q(n, k) := \{V: V \le \mathbb{F}_q^n, \dim V = k\}$. Clearly, $\mathbb{P}_q(n) = \bigcup\limits_{k=0}^{n} \mathbb{G}_q(n, k)$. The \emph{subspace distance} between two subspaces $X$ and $Y$ in $\mathbb{P}_q(n)$ is defined as
\begin{equation*}
	d_S(X, Y) := \dim (X+Y) - \dim (X \cap Y),
\end{equation*}
where $X+Y$ denotes the smallest subspace containing both $X$ and $Y$. It was proved in \cite{KK, AAK} that the projective space $\mathbb{P}_q(n)$ is a metric space under the action of the subspace distance metric. A \emph{code} in the projective space $\mathbb{P}_q(n)$ is a subset of $\mathbb{P}_q(n)$.

Codes in projective spaces have recently gained attention since they were proved to be useful for error and erasure-correction in \emph{random network coding} \cite{KK}. An $(n, M, d)$ code in $\mathbb{P}_q(n)$ is a collection of $M$ number of subspaces of $\mathbb{F}_q^n$ such that the minimum subspace distance between any two of them is $d$. K\"{o}tter and Kschischang showed that an $(n, M, d)$ code can correct any combination of $t$ errors and $\rho$ erasures during the communication of packets through a volatile network as long as $2(t+\rho) < d$ \cite{KK}. Subsequently codes in $\mathbb{P}_q(n)$ were studied extensively \cite{EV, HKK, BP, GR}. Such codes are also referred to as \emph{subspace codes}.

However, designing and studying code structures in $\mathbb{P}_q(n)$ is considered relatively trickier than study of classical error-correction in the \emph{Hamming space} $\mathbb{F}_q^n$. This is because unlike in $\mathbb{F}_q^n$, the volume of a \emph{sphere} is not independent of the choice of its \emph{center} in the projective space $\mathbb{P}_q(n)$. Thus, standard geometric intuitions often do not hold in $\mathbb{P}_q(n)$. In other words, $\mathbb{F}_q^n$ is \emph{distance-regular} while $\mathbb{P}_q(n)$ is not. This implies that a different framework is required to study codes in projective spaces than the approach taken for block codes in classical error-correction, e.g. in \cite{MS}. The problem of lack of distance-regularity in $\mathbb{P}_q(n)$ is, however, tackled to some extent by considering codewords of a fixed dimension. Such class of subspace codes are commonly known as \emph{constant dimension codes}. A constant dimension code in $\mathbb{P}_q(n)$ is a subset of $\mathbb{G}_q(n, k)$ for some $0 \le k \le n$. The fact that $\mathbb{G}_q(n, k)$ is distance-regular is exploited to construct various classes of constant dimension codes \cite{SE, SE2, GY, XF, TMBR, KoK, ER}.

A lattice framework for studying both binary block codes and subspace codes was discussed in \cite{MB}. The authors of \cite{BEV} established that codes in projective spaces are $q$-analogs of binary block codes defined within Hamming spaces using a framework of lattices. A \emph{lattice} is a partially ordered set wherein both least upper bound and greatest lower bound of any pair of elements exist within the set and are unique. We denote the set of all subsets of the canonical $n$-set $[n] := \{1, \ldots, n\}$ as $\mathcal{P}(n)$, also known as the \emph{power set} of $[n]$. The lattices corresponding to the block codes in $\mathbb{F}_2^n$ and the subspace codes in $\mathbb{P}_q(n)$ are the \emph{power set lattice} $(\mathcal{P}(n), \cup, \cap, \subseteq)$ and the \emph{linear lattice} $(\mathbb{P}_q(n), +, \cap, \le)$, respectively. Here the notation $\subseteq$ represents set inclusion.

A lattice is called \emph{modular} if the modularity condition holds for all elements in it (see Def.~\ref{LD5}). Many of the well-known coding spaces including both $\mathbb{F}_q^n$ and $\mathbb{P}_q(n)$ are examples of a modular lattice. This motivated the work of Kendziorra and Schmidt where they generalized the model of subspace codes introduced in \cite{KK} to codes in modular lattices \cite{KS}. There are two significant types of semimodular lattices, viz. \emph{geometric} lattices and \emph{distributive} lattices that have inspired a rich variety of literature, e.g. \cite{B, RS}. While $\mathbb{F}_2^n$ is a geometric distributive lattice, $\mathbb{P}_q(n)$ is an example of a geometric lattice which is modular but non-distributive. There have been quite a few attempts to characterize distributive lattices, such as in \cite{LS, S}. However, no known characterization of geometric distributive lattices exists to the best of our knowledge.

The notion of ``linearity" and ``complements" in $\mathbb{P}_q(n)$ are not as straightforward as they are in the Hamming space $\mathbb{F}_2^n$. This is owed to the fact that $\mathbb{F}_2^n$ is a vector space with respect to the bitwise XOR-operation whereas $\mathbb{P}_q(n)$ or $\mathbb{G}_q(n, k)$ are not vector spaces with respect to the usual vector space addition. Therefore, the subspace distance metric is not \emph{translation invariant} over $\mathbb{P}_q(n)$ or $\mathbb{G}_q(n, k)$. Braun et al. addressed this problem in \cite{BEV} and defined linearity and complements in subsets of $\mathbb{P}_q(n)$ by elucidating key features from the equivalent notion in $\mathbb{F}_2^n$.

The maximum size of a linear code in $\mathbb{P}_2(n)$ was conjectured to be $2^n$ by Braun et al. \cite{BEV}. A particular case of this problem was proved by Pai and Rajan where the ambient space $\mathbb{F}_2^n$ is included as a codeword \cite{PS}. The maximal code achieving the upper bound was identified as a \emph{code derived from a fixed basis}. The authors of \cite{PS} observed that such a code is basically embedding of a distributive lattice into the linear lattice, which is geometric. This motivated them to conjecture a generalized statement which can already be found in literature, e.g. in \cite[Ch.~IX, Sec.~4, Ex.~1]{B}.
\begin{problem}
	\label{Pr1}
	The size of the largest distributive sublattice of a gemetric lattice of height $n$ must be $2^n$.
\end{problem}
Lattice-theoretic connection of other classes of linear codes in $\mathbb{P}_q(n)$ was investigated thoroughly in \cite{BK}. The findings of \cite{BK} include the discovery that the only class of linear subspace codes that have a sublattice structure of the corresponding linear lattice must be geometric distributive. Thus it is an interesting problem to find out a unique characterization of geometric distributive lattices should it exist.

In this paper, we determine the unique criterion for a geometric lattice to be distributive. We in fact prove a more generalized version of this statement. This helps us to bring out the unique characterization of class of geometric distributive lattices. We then use this characterization to solve a few problems involving linear codes and complements in $\mathbb{P}_q(n)$. Problem~\ref{Pr1} is also solved by applying the said characterization.

The rest of the paper is organized as follows. In Section~\ref{S2} we give a few requisite definitions concerning lattices and formally define linear codes and complements in the projective space $\mathbb{P}_q(n)$. Section~\ref{UAL} concerns with the study of uniquely atomistic lattices; in particular we show that any such finite lattice is modular. The \emph{unique-decomposition theorem} that gives the unique characterization of finite geometric distributive lattices is derived in Section~\ref{S3} after proving a sequence of results regarding modular lattices and distributive lattices. Section~\ref{S4} is attributed to various applications of the unique-decomposition theorem in lattice theory that include determining the maximum size of a distributive sublattice of a finite geometric lattice and counting the \emph{Whitney numbers} of a geometric distributive lattice. In particular, we consider a few problems about linearity and complements in the linear lattice. An important finding is that any distributive sublattice of $\mathbb{P}_q(n)$ can be used to construct a linear code closed under intersection. Concluding remarks and interesting open problems are listed in Section~\ref{S5}.
\paragraph{Notation.}
$\mathbb{F}_q^n$ represents the unique vector space of dimension $n$ over $\mathbb{F}_q$. The set of all subspaces of $\mathbb{F}_q^n$ is denoted as $\mathbb{P}_q(n)$. The usual vector space sum of two disjoint subspaces $X$ and $Y$, called the \emph{direct sum} of $X$ and $Y$, is written as $X \oplus Y$. For any subset $\mathcal{U} \subseteq \mathbb{P}_q(n)$, the collection of all $i$-dimensional members of $\mathcal{U}$ will be denoted as $\mathcal{U}_i$; $\mathcal{U}_i := \{X: X \in \mathcal{U}, \dim X = i\}$. The notation $\langle \mathcal{S}\rangle$ for any subset $\mathcal{S}$ of vectors in $\mathbb{F}_q^n$ will denote the linear span of all the vectors in $\mathcal{S}$. $\triangle$ denotes the \emph{symmetric difference} operator, which can be defined for two sets $S$ and $T$ as
\begin{equation*}
	S \triangle T := (S \cup T) \backslash (S \cap T).
\end{equation*}
\section{Preliminaries}
\label{S2}
\subsection{An Overview of Lattices}
We will go through some standard definitions and results concerning lattices that can be found in the existing literature, e.g. in \cite{B}.
\begin{definition}
	\label{LD1}
	For a set $P$, the pair $(P, \preceq)$ is called a \emph{poset} if there exists a binary relation $\preceq$ on $P$, called the \emph{order relation}, that satisfies the following for all $x, y, z \in P$:
	\begin{itemize}
		\item[(i)] (Reflexivity) $x \preceq x$;
		\item[(ii)] (Antisymmetry) If $x \preceq y$ and $y \preceq x$, then $x = y$; and
		\item[(iii)] (Transitivity) If $x \preceq y$ and $y \preceq z$, then $x \preceq z$.
	\end{itemize}
	The \emph{dual} of a poset $P$ is the poset $P^{*}$ defined on the same set as $P$ such that $y \preceq x$ in $P^{*}$ if and only if $x \preceq y$ in $P$.
\end{definition}
The notation $x \preceq y$ is read as ``$x$ is less than $y$'' or ``$x$ is contained in $y$''. If $x \preceq y$ such that $x \ne y$, then we write $x \prec y$. In the sequel a poset $(P, \preceq)$ will be denoted as $P$ when the order relation $\prec$ is obvious from the context.
\begin{definition}
	\label{LD2}
	An upper bound (lower bound) of a subset $S$ of a poset $P$ is an element $p \in P$ containing (contained in) every $s \in S$. A least upper bound (greatest lower bound) of $S \subseteq P$ is an element of $P$ contained in (containing) every upper bound (lower bound) of $S$.
\end{definition}
A least upper bound or a greatest lower bound of a poset, should it exist, is unique according to the antisymmetry property of the order relation $\preceq$. The least upper bound and the greatest lower bound of a poset $P$ are denoted as $\sup P$ and $\inf P$, respectively.
\begin{definition}
	\label{LD3}
	A \emph{lattice} $(L, \vee, \wedge)$ is a poset $L$ such that $\sup \{x, y\}$ and $\inf \{x, y\}$ exist for all $x, y \in L$. The notation for the $\sup \{x, y\}$ and the $\inf \{x, y\}$ are $x \vee y$ (``$x$ \emph{join} $y$'') and $x \wedge y$ (``$x$ \emph{meet} $y$''), respectively.
\end{definition}
Once again, a lattice $(L, \vee, \wedge)$ will be denoted as $L$ whenever the join $\vee$ and meet $\wedge$ operations are obvious from the context. In this work we will consider only finite lattices, i.e. when the underlying poset is finite. The unique greatest element and the unique least element of a lattice will be denoted as $I$ and $O$, respectively, unless specified otherwise.
\begin{definition}
	\label{LD4}
	A \emph{sublattice} of a lattice $L$ is a subset $S \subseteq L$ such that $x \vee y, x \wedge y \in S$ for all $x, y \in S$.
\end{definition}
The \emph{Hasse diagram} of a finite poset completely describes the order relations of that poset. If $x \prec y$ in the poset $P$ such that there exists no $z \in P$ satisfying $x \prec z \prec y$, then $y$ is said to \emph{cover} $x$; we denote this as $x \lessdot y$. In the Hasse diagram of a lattice, two elements are joined if and only if one of them covers the other; $y$ is written above $x$ if $y$ covers $x$. Hence, $x \prec y$ if and only if there exists a path from $x$ moving up to $y$.

The power set $\mathcal{P}(m)$ of a finite set $[m]$ and the projective space $\mathbb{P}_q(n)$ are examples of lattices. The Hasse diagram associated with the lattice of $(\mathbb{P}_2(2), +, \cap)$ is shown here (Fig.~\ref{F1}). This particular lattice is known as $M_3$.
\begin{figure}[t]
	\centering
	\begin{tikzpicture}[scale=0.6]
		\node (A1) at (0,3) {$\mathbb{F}_2^2$};
		\node (A2) at (-3,0) {$\langle \{(0, 1)\}\rangle$};
		\node (A3) at (0,0) {$\langle \{(1, 0)\}\rangle$};
		\node (A4) at (3,0) {$\langle \{(1, 1)\}\rangle$};
		\node (A5) at (0,-3) {$\{0\}$};
		\draw (A5) -- (A2) -- (A1) -- (A3) -- (A5) -- (A4) -- (A1);
	\end{tikzpicture}
	\caption{$M_3$ lattice representing $(\mathbb{P}_2(2), +, \cap)$}
	\label{F1}
\end{figure}
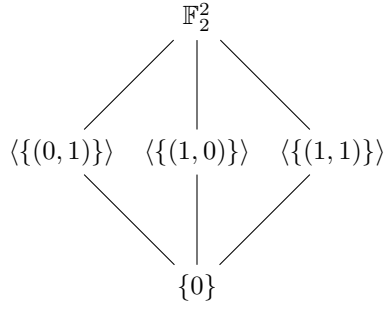
\begin{definition}
	\label{LD5}
	A finite lattice $(L, \vee, \wedge)$ is \emph{semimodular} if the following holds for all $x, y \in L$:
	\begin{equation*}
		x \wedge y \lessdot x, y \quad \Rightarrow \quad x, y \lessdot x \vee y.
	\end{equation*}
	A lattice is \emph{modular} if both the lattice and its dual are semimodular. It can be proved that a finite lattice $L$ is modular if for any $x, y, z \in L$ the following holds:
	\begin{equation*}
		x \preceq z \Rightarrow x \vee (y \wedge z) = (x \vee y) \wedge z.
	\end{equation*}
\end{definition}
The smallest finite lattice that is non-modular is called $N_5$ (Fig.~\ref{F0}). $N_5$ plays a crucial role in characterizing modular lattices as we will see later.

The elements of a lattice which cover the least element of the lattice are known as \emph{atoms}. A lattice with a least element is \emph{atomic} if for every non-zero element $a$ there exists an atom $p$ such that $p \preceq a$. An atomic lattice is called \emph{atomistic} if any element is a join of atoms. A lattice that is \emph{uniquely} atomistic is defined in the following way.
\begin{definition}
	\label{LDA}
	A lattice is \emph{uniquely atomistic} if each element therein is uniquely expressible as join of its atoms. If $L$ is a uniquely atomistic lattice with $\{x_1, \ldots, x_m\}$ as the set of all atoms in $L$ then for any $x \in L$ there exists a unique subset $S_x \subseteq [m]$ such that $x = \bigvee\limits_{i \in S_x} x_i$. We denote this relation as $x = \sup S_x$ when the choice of $m$ is clear from the context.
\end{definition}
Atoms play an important role in defining geometric lattices.
\begin{definition}
	\label{LD6}
	A finite lattice that is both semimodular and atomistic is called \emph{geometric}.
\end{definition}
Both the linear lattice $(\mathbb{P}_q(n), +, \cap)$ and the power set lattice $(\mathcal{P}(m), \cup, \cap)$ are examples of a geometric lattice. From Section~\ref{S4} onwards all lattices considered will be geometric. The variety of modular lattices that will play a key role in this work are the distributive lattices which are defined next.
\begin{definition}
	\label{LD7}
	A lattice $L$ is \emph{distributive} if the following two equivalent conditions hold for any $x, y, z \in L$:
	\begin{eqnarray}
		x \vee (y \wedge z) &=& (x \vee y) \wedge (x \vee z); \nonumber \\
		x \wedge (y \vee z) &=& (x \wedge y) \vee (x \wedge z). \nonumber
	\end{eqnarray}
\end{definition}
The power set lattice $\mathcal{P}(m)$ is an example of a distributive lattice. However, the linear lattice $\mathbb{P}_q(n)$ is modular but not distributive. In general, any distributive lattice is modular, and the $M_3$ lattice is pivotal in characterizing modular non-distributive lattices. Similarly a modular lattice can be defined by non-inclusion of the lattice $N_5$. The following theorem is due to Dedekind and Birkhoff.
\begin{theorem}(\cite{G}, Page~59)
	\label{TL1}
	A lattice is modular if and only if it does not contain a sublattice isomorphic to $N_5$. A modular lattice is non-distributive if and only if it contains a sublattice isomorphic to $M_3$.
\end{theorem}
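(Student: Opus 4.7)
The plan is to prove both biconditionals by handling each direction separately, using direct construction of the forbidden sublattice from a failure of the relevant identity. For the first statement, one direction is a short inspection: in $N_5$, with elements $O \prec a \prec b \prec I$ and a third chain $O \prec c \prec I$ satisfying $a \wedge c = b \wedge c = O$ and $a \vee c = b \vee c = I$, modularity fails because $a \preceq b$ but $a \vee (c \wedge b) = a \vee O = a \ne b = I \wedge b = (a \vee c) \wedge b$. Since a sublattice inherits the meet and join of the ambient lattice, any $L$ containing an $N_5$-sublattice fails modularity on those five elements.

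For the converse, I would assume $L$ is non-modular and manufacture an $N_5$. By Def.~\ref{LD5}, there exist $x, y, z \in L$ with $x \preceq z$ and $x \vee (y \wedge z) \prec (x \vee y) \wedge z$ strictly (the reverse inequality is automatic). I would propose the candidate sublattice
\begin{equation*}
\{\, y \wedge z,\ x \vee (y \wedge z),\ y,\ (x \vee y) \wedge z,\ x \vee y \,\}.
\end{equation*}
Using $x \preceq z$, one checks $x \vee (y \wedge z) \preceq z$, which gives $(x \vee (y \wedge z)) \wedge y = y \wedge z$; dually, $x = x \wedge z \preceq (x \vee y) \wedge z$ yields $((x \vee y) \wedge z) \vee y = x \vee y$. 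These two identities, together with the obvious comparabilities, show the five elements sit in the pentagonal pattern. Distinctness follows from the strict inequality assumption: if $x \preceq y$ or $y \preceq z$, one can verify $x \vee (y \wedge z) = (x \vee y) \wedge z$, contradicting the choice of $x, y, z$.

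For the second statement, direct inspection of $M_3$ (with three atoms $a, b, c$) shows $a \wedge (b \vee c) = a \ne O = (a \wedge b) \vee (a \wedge c)$, so $M_3$ is non-distributive, and any lattice containing it inherits this failure. Conversely, suppose $L$ is modular but non-distributive, with witnesses $x, y, z$ where $x \wedge (y \vee z) \ne (x \wedge y) \vee (x \wedge z)$. The classical median construction is to define
\begin{equation*}
d := (x \wedge y) \vee (y \wedge z) \vee (z \wedge x), \quad u := (x \vee y) \wedge (y \vee z) \wedge (z \vee x),
\end{equation*}
and then $a := d \vee (x \wedge u)$, $b := d \vee (y \wedge u)$, $c := d \vee (z \wedge u)$. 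I would argue using repeated applications of the modular law (in the form $p \vee (q \wedge r) = (p \vee q) \wedge r$ when $p \preceq r$) that $a \wedge b = a \wedge c = b \wedge c = d$ and $a \vee b = a \vee c = b \vee c = u$, and that failure of distributivity forces $d \prec u$, so $a, b, c$ are pairwise distinct and lie strictly between $d$ and $u$. This exhibits $\{d, a, b, c, u\}$ as the desired $M_3$-sublattice.

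The principal obstacle is the bookkeeping in this last step: each of the six meet/join identities among $a, b, c$ requires a carefully ordered chain of modular-law substitutions, and at each step one must verify the appropriate comparability before invoking modularity. A single inadvertent appeal to distributivity would collapse the argument, so the delicate point is to organize the computations so that every identity is reduced to a modular rearrangement of the base expressions in $x, y, z$, never a distributive one.
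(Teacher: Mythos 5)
Your proposal is essentially correct, but note that the paper itself gives no proof of this statement: Theorem~\ref{TL1} is the classical Dedekind--Birkhoff characterization, quoted verbatim from the cited reference (\cite{G}, p.~59) and used as a black box (e.g.\ in Theorem~\ref{UAT} and Theorem~\ref{T1}). So there is no in-paper argument to compare against; what you have reconstructed is the standard textbook proof that lives in the reference. Your outline is sound on all four directions. The pentagon $\{\,y \wedge z,\ x \vee (y \wedge z),\ y,\ (x \vee y) \wedge z,\ x \vee y\,\}$ is the right witness for non-modularity, and your distinctness argument is the correct one: $x \vee (y \wedge z) = y \wedge z$ forces $x \preceq y$, and $(x \vee y) \wedge z = x \vee y$ forces $y \preceq z$, either of which collapses the strict inequality. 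For the $M_3$ half, the median construction with $d$, $u$, $a$, $b$, $c$ is standard; the one step you wave at --- ``failure of distributivity forces $d \prec u$'' --- deserves to be pinned down, since it is where the witnesses re-enter: one shows by the modular law that $x \wedge d = (x \wedge y) \vee (x \wedge z)$ and $x \wedge u = x \wedge (y \vee z)$, so $d = u$ would contradict the choice of $x, y, z$. With that supplied, and the six meet/join identities verified by the modular-law bookkeeping you already anticipate, the argument is complete. You correctly identify the main hazard (accidentally invoking distributivity mid-computation), which is precisely why most treatments order the substitutions carefully; your plan is faithful to that.
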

\begin{definition}
	\label{LD8}
	A real valued function $v: L \rightarrow \mathbb{R}$ on a lattice $L$ is called a \emph{positive isotone valuation} if the following conditions hold for all $x, y \in L$:
	\begin{itemize}
		\item[(i)] (Valuation) $v(x \vee y) + v(x \wedge y) = v(x) + v(y)$; 
		\item[(ii)] (Isotone) $x \preceq y \Rightarrow v(x) \le v(y)$;
		\item[(iii)] (Positive) $x \prec y \Rightarrow v(x) < v(y)$.
	\end{itemize}
\end{definition}
The distance function induced by an isotone valuation $v$ is defined as $d_v(x, y) := v(x \vee y) - v(x \wedge y)$.
\begin{theorem}(\cite{B})
	\label{TL2}
	For an isotone valuation $v$ defined on a lattice $L$, the function $d_v(x, y) := v(x \vee y) - v(x \wedge y)$ is a metric if and only if $v$ is positive.
\end{theorem}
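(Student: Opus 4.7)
The plan is to prove the biconditional by verifying each metric axiom separately and tracking which hypotheses are needed. Symmetry of $d_v$ is immediate from the symmetry of $\vee$ and $\wedge$, so I will focus on non-negativity, the identity of indiscernibles, and the triangle inequality. For non-negativity, the inequality $x \wedge y \preceq x \vee y$ together with the isotone property instantly gives $v(x \wedge y) \le v(x \vee y)$, hence $d_v(x,y) \ge 0$.

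For the identity of indiscernibles, if $x = y$ then $x \vee y = x \wedge y = x$ and $d_v(x,y) = 0$. Conversely, $d_v(x,y) = 0$ means $v(x \wedge y) = v(x \vee y)$. Since $x \wedge y \preceq x \vee y$, the contrapositive of positivity forces $x \wedge y = x \vee y$, whence $x = y$; this is the \emph{only} place where positivity (as opposed to mere isotonicity) enters the forward direction. For the reverse direction, assuming $d_v$ is a metric, I will take any $x \prec y$ and note that $x \wedge y = x$ and $x \vee y = y$, so $d_v(x,y) = v(y) - v(x)$; since $x \ne y$, the metric property yields $d_v(x,y) > 0$, giving $v(x) < v(y)$, which is exactly positivity.

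The main obstacle is the triangle inequality $d_v(x,z) \le d_v(x,y) + d_v(y,z)$, which requires the valuation identity. Using $v(a \vee b) + v(a \wedge b) = v(a) + v(b)$ twice to rewrite $d_v(a,b) = v(a) + v(b) - 2 v(a \wedge b)$, the triangle inequality reduces to showing
\begin{equation*}
v(x \wedge y) + v(y \wedge z) \;\le\; v(y) + v(x \wedge z).
\end{equation*}
I will apply the valuation identity to the pair $x \wedge y,\ y \wedge z$ to obtain
\begin{equation*}
v(x \wedge y) + v(y \wedge z) \;=\; v\bigl((x \wedge y) \vee (y \wedge z)\bigr) + v(x \wedge y \wedge z).
\end{equation*}
Then observing $(x \wedge y) \vee (y \wedge z) \preceq y$ and $x \wedge y \wedge z \preceq x \wedge z$, the isotone property gives the two needed upper bounds term-by-term, and adding yields the desired inequality. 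Notably, isotonicity (not positivity) suffices here, which is consistent with the fact that positivity is only needed to separate distinct points.

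Putting these pieces together establishes that positivity is both sufficient (to make $d_v$ a metric, via the identity-of-indiscernibles step) and necessary (from applying the metric property along a covering chain). The one subtlety worth double-checking is that the triangle-inequality reduction above treats the three elements $x,y,z$ with no assumed order, which is fine because the valuation identity holds for arbitrary pairs in $L$; no modularity or distributivity is invoked.
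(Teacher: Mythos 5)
Your proof is correct: symmetry and non-negativity from isotonicity, the reduction of the triangle inequality to $v(x \wedge y) + v(y \wedge z) \le v(y) + v(x \wedge z)$ via the valuation identity applied to the pair $x \wedge y$, $y \wedge z$, and the isolation of positivity as exactly what is needed for (and implied by) the separation axiom all check out, with no hidden use of modularity. The paper does not prove this statement --- it cites it directly from Birkhoff --- so there is no in-paper argument to compare against; your write-up is essentially the standard textbook proof of that cited result.
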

A totally ordered subset of a lattice is called a \emph{chain}. Given two elements $x$ and $y$ in a lattice $L$, a chain of $L$ between $x$ and $y$ is a chain $\{x_1, \ldots, x_l\}$ such that $x = x_0 \prec x_1 \prec \cdots \prec x_l = y$. The \emph{length} of this chain is $l$. The \emph{height} of an element $x \in L$ is the maximum length of all chains between $O$ and $x$, and is denoted by $h_L(x)$. We often use the notation $h(x)$ when $L$ is obvious from the context. The \emph{height of the lattice} $L$ is the number $h_L(I)$ where $I$ is the greatest element in $L$.

For modular lattices, the following is a consequence of Theorem~\ref{TL2}.
\begin{theorem}[Page~41, Theorem~16, \cite{B}]
	\label{TL3}
	If $h$ is the height function defined on a finite modular lattice $L$ then $h$ is a positive isotone valuation and $d_h$ is a metric on $L$.
\end{theorem}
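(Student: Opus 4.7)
The plan is to verify that the height function $h$ satisfies the three conditions (valuation, isotone, positivity) in Definition~\ref{LD8}, and then invoke Theorem~\ref{TL2} to conclude that $d_h$ is a metric. A preliminary step is to establish the \emph{Jordan--Dedekind chain condition} for modular lattices: every maximal chain between two comparable elements has the same length. This is a standard consequence of the covering property implied by semimodularity, and it guarantees that $h(x)$, defined as the maximum length of a chain from $O$ to $x$, in fact equals the length of \emph{every} maximal chain from $O$ to $x$. With $h$ so characterized, isotonicity and positivity are almost immediate: if $x \preceq y$, then concatenating any maximal chain from $O$ to $x$ with any maximal chain from $x$ to $y$ yields a maximal chain from $O$ to $y$, so $h(y) \ge h(x)$; if moreover $x \prec y$ the appended chain has length at least one, so the inequality is strict.

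The technical heart of the argument is the valuation identity $h(x \vee y) + h(x \wedge y) = h(x) + h(y)$. I would prove the \emph{diamond isomorphism}: in a modular lattice the intervals $[x \wedge y,\, x]$ and $[y,\, x \vee y]$ are isomorphic as posets under the mutually inverse maps $\phi(a) = a \vee y$ and $\psi(b) = b \wedge x$. Both maps are order-preserving by construction, so the substance lies in showing they are inverses. For $a$ with $x \wedge y \preceq a \preceq x$, the modular law from Definition~\ref{LD5}, applied with the roles ``$x$'' $= a$, ``$y$'' $= y$, ``$z$'' $= x$, gives $\psi(\phi(a)) = (a \vee y) \wedge x = a \vee (y \wedge x) = a$, the last step because $x \wedge y \preceq a$; a dual computation handles $\phi(\psi(b)) = b$. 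Since isomorphic intervals in a lattice satisfying the Jordan--Dedekind condition have equal height, we conclude $h(x) - h(x \wedge y) = h(x \vee y) - h(y)$, which rearranges to the valuation identity.

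The main obstacle I anticipate is precisely this modular-law manipulation: one must check carefully that the hypothesis $x \wedge y \preceq a \preceq x$ supplies exactly the inclusion needed for the form of modularity stated in Definition~\ref{LD5}, and that the dual composition $\phi\circ\psi$ is handled by the same law applied to the dual lattice (which is also modular). Once this bookkeeping is clean, all three properties of $h$ are established, and Theorem~\ref{TL2} immediately yields that $d_h$ is a metric on $L$.
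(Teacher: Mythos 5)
The paper gives no proof of this statement---it is quoted verbatim from Birkhoff (\cite{B}, Page~41, Theorem~16)---so there is nothing internal to compare against; your argument is the standard one from the lattice-theory literature and it is correct. The chain of reasoning (Jordan--Dedekind condition from semimodularity, isotonicity and positivity from concatenation of maximal chains, the valuation identity via the diamond isomorphism $[x \wedge y, x] \cong [y, x \vee y]$ using the modular law in both composites, and finally Theorem~\ref{TL2} to get that $d_h$ is a metric) is complete and each application of the modular law is used with the correct hypothesis.
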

By definition, $h_L(x) = 0$ if and only if $x$ is the least element in $L$; similarly, $h_L(x) = 1$ if and only if $x$ is an atom in $L$.
\begin{definition}
	\label{LD9}
	The total number of elements with a given height $k$ of a lattice $L$ with a height function $h$ defined on $L$ is called the \emph{Whitney number}, denoted as $W_k(L)$. In terms of notation, $W_k(L) := |\{x \in L: h_L(x) = k\}|$.
\end{definition}
\subsection{Complements and Linearity in Projective Spaces}
The notions of complements and linearity in the projective space $\mathbb{P}_q(n)$ are not straightforward as they are in the Hamming space $\mathbb{F}_2^n$. Braun et al. introduced the definition of both in \cite{BEV} by extracting key properties of the same in $\mathbb{F}_2^n$. We begin with a formal definition of the complement mapping in $\mathbb{P}_q(n)$.
\begin{definition}
	\label{D1}
	For any subset $\mathcal{U} \subseteq \mathbb{P}_q(n)$, a function $f: \mathcal{U} \rightarrow \mathcal{U}$ is a \emph{complement} on $\mathcal{U}$ if $f$ satisfies the following conditions:
	\begin{itemize}
		\item[(i)] $X \cap f(X) = \{0\}$ and $X \oplus f(X) = \mathbb{F}_q^n$ for all $X \in \mathcal{U}$;
		\item[(ii)] There exists a unique $f(X) \in \mathcal{U}_{n-k}$ for each $X \in \mathcal{U}_k$ for all $0 \le k \le n$;
		\item[(iii)] $f(f(X)) = X$ for all $X \in \mathcal{U}$; and
		\item[(iv)] $d_S(X, Y) = d_S (f(X), f(Y))$ for all $X, Y \in \mathcal{U}$.
	\end{itemize}
\end{definition}
It was proved before that a complement function does not exist in the entirety of $\mathbb{P}_q(n)$ \cite[Theorem~10]{BEV}. The largest size of a subset of $\mathbb{P}_q(n)$ wherein a complement can be defined still remains an open problem. However, we will tackle that question in Section~\ref{S4} with the additional constraint that a subset has distributive sublattice structure. The following is an upper bound on the number of one-dimensional subspaces in a subset with a complement defined on it.
\begin{proposition}({\cite{BEV}, Proposition~1})
	\label{CP1}
	Suppose there exists a complement on the subset $\mathcal{U} \subseteq \mathbb{P}_2(n)$. Then $|\mathcal{U}_1| \le 2^{n-1}$.
\end{proposition}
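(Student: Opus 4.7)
The plan is to exploit the distance-preservation of $f$ to establish that no two-dimensional subspace of $\mathbb{F}_2^n$ contains three distinct members of $\mathcal{U}_1$, and then count two-dimensional subspaces through a fixed $X_0 \in \mathcal{U}_1$ to obtain the bound.

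First I would prove a \emph{reciprocity lemma}: for distinct $X, Y \in \mathcal{U}_1$, one has $Y \subseteq f(X)$ if and only if $X \subseteq f(Y)$. Since $X$ is one-dimensional and $f(Y)$ is $(n-1)$-dimensional, a direct computation gives $d_S(X, f(Y)) = n-2$ when $X \subseteq f(Y)$ and $d_S(X, f(Y)) = n$ otherwise, and the same dichotomy holds for $d_S(f(X), Y)$. Combining properties (iii) and (iv) of Definition~\ref{D1} yields $d_S(X, f(Y)) = d_S(f(X), f(f(Y))) = d_S(f(X), Y)$, so the two containment statements are equivalent.

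The heart of the argument is a \emph{no-triangle} step: I claim no two-dimensional subspace $T$ of $\mathbb{F}_2^n$ contains three distinct elements $A, B, C$ of $\mathcal{U}_1$. Since $T$ has exactly three one-dimensional subspaces, they must be $A, B, C$. For each $P \in \{A, B, C\}$, the hyperplane $f(P)$ cannot contain $T$ (otherwise $P \subseteq T \subseteq f(P)$, contradicting $P \cap f(P) = \{0\}$), while $T \cap f(P)$ must have positive dimension on dimensional grounds, so $T \cap f(P)$ is exactly one of the two members of $\{A, B, C\} \setminus \{P\}$. Treating the indicators $\alpha = [B \subseteq f(A)]$, $\beta = [C \subseteq f(A)]$, $\gamma = [C \subseteq f(B)]$ as elements of $\mathbb{F}_2$ and using the reciprocity lemma to rewrite the analogous incidence conditions for $f(B)$ and $f(C)$, the three ``exactly-one'' constraints collapse to $\alpha + \beta = \alpha + \gamma = \beta + \gamma = 1$; summing yields $0 = 1$ in $\mathbb{F}_2$, the desired contradiction.

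Finally, if $\mathcal{U}_1 = \emptyset$ the bound is trivial, so I fix $X_0 \in \mathcal{U}_1$ and consider the map $\phi(Y) := Y + X_0$ sending $\mathcal{U}_1 \setminus \{X_0\}$ into the set of two-dimensional subspaces of $\mathbb{F}_2^n$ containing $X_0$, a set of cardinality $2^{n-1} - 1$. The no-triangle step forces $\phi$ to be injective, since $Y \ne Y'$ with $Y + X_0 = Y' + X_0$ would place three distinct elements of $\mathcal{U}_1$ inside one two-dimensional subspace. Therefore $|\mathcal{U}_1| \le 1 + (2^{n-1} - 1) = 2^{n-1}$, as claimed. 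The main obstacle is the no-triangle lemma: the reciprocity of the first step is precisely what allows the three indicator variables to be defined consistently across the three incidence constraints and produce the $\mathbb{F}_2$-parity contradiction; the remaining steps are routine once this is in place.
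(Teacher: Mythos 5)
The paper states Proposition~\ref{CP1} as a quoted result from \cite{BEV} and gives no proof of its own, so there is nothing internal to compare against; judged on its own terms, your argument is correct and self-contained. The reciprocity lemma is sound: the dichotomy $d_S(X,f(Y)) \in \{n-2, n\}$ according to whether $X \subseteq f(Y)$, combined with properties (iii) and (iv) of Definition~\ref{D1}, does give $Y \subseteq f(X) \Leftrightarrow X \subseteq f(Y)$. The no-triangle step also checks out: for each $P \in \{A,B,C\}$ the intersection $T \cap f(P)$ is forced to be exactly one of the other two lines (both would give $T \subseteq f(P)$ and hence $P \subseteq f(P)$; neither would make $T \cap f(P)$ trivial against the dimension bound $\dim(T \cap f(P)) \ge 1$), and reciprocity is precisely what lets the three ``exactly one'' conditions be written in the three indicators $\alpha, \beta, \gamma$, yielding the parity contradiction $\alpha+\beta = \alpha+\gamma = \beta+\gamma = 1$. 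The final count is right as well: there are $2^{n-1}-1$ two-dimensional subspaces through a fixed line $X_0$, and injectivity of $Y \mapsto Y + X_0$ follows from the no-triangle property. (Equivalently, your no-triangle lemma says the set of nonzero vectors spanning lines of $\mathcal{U}_1$ is sum-free in $\mathbb{F}_2^n$, from which the $2^{n-1}$ bound is classical; your counting is a clean direct rendering of that.) One cosmetic remark: the edge cases $n \le 2$ are covered by your argument but deserve a sentence, since for $n=2$ the ``hyperplane'' $f(P)$ is itself a line and the distance dichotomy degenerates to $d_S \in \{0,2\}$ --- still consistent with $\{n-2,n\}$, so nothing breaks.
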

We will later investigate the same for distributive sublattices of $\mathbb{P}_q(n)$ for all prime powers $q$.

Braun et al. defined linearity in $\mathbb{P}_2(n)$ by identifying a subset that is a vector space over $\mathbb{F}_2$ with respect to some randomly chosen linear operation such that the corresponding subspace distance metric is translation invariant within the chosen subset. Later this definition was generalized for all prime powers \cite{PS, BK2}.
\begin{definition}
	\label{D2}
	A subset $\mathcal{U} \subseteq \mathbb{P}_q(n)$ is called a \emph{linear code} if $\{0\} \in \mathcal{U}$ and there exists a function $\boxplus: \mathcal{U} \times \mathcal{U} \rightarrow \mathcal{U}$ such that
	\begin{itemize}
		\item[(i)] $(\mathcal{U}, \boxplus)$ is an abelian group;
		\item[(ii)] $X \boxplus \{0\} = X$ for all $X \in \mathcal{U}$;
		\item[(iii)] $X \boxplus X = \{0\}$ for all $X \in \mathcal{U}$; and
		\item[(iv)] $d_S(X, Y) = d_S(X \boxplus W, Y \boxplus W)$ for all $X, Y, W \in \mathcal{U}$.
	\end{itemize}
\end{definition}
The first three conditions stated in the above definition makes any linear code in $\mathbb{P}_q(n)$ a vector space over $\mathbb{F}_2$. It was conjectured in \cite{BEV} that a linear code in $\mathbb{P}_2(n)$ can be as large as $2^n$ at most.

The linear addition of two disjoint codewords in a linear code yields their usual vector space sum.
\begin{lemma}(\cite{BEV}, Lemma~8)
	\label{LL1}
	For two codewords $X$ and $Y$ of a linear code $\mathcal{U} \subseteq \mathbb{P}_q(n)$, $X \boxplus Y = X + Y$ if $X \cap Y = \{0\}$.
\end{lemma}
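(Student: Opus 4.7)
The plan is to set $Z := X \boxplus Y$ and show in two stages that (a) $\dim Z = \dim X + \dim Y$, and (b) both $X \subseteq Z$ and $Y \subseteq Z$; together with $X \cap Y = \{0\}$ these force $Z = X + Y$ on dimensional grounds.

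First I would extract dimensional information from the translation-invariance axiom (iv) of Definition~\ref{D2}. Translating the pair $(X,X)$ by $X$ and using $X \boxplus X = \{0\}$ gives
\begin{equation*}
	d_S(\{0\}, Z) \;=\; d_S(X \boxplus X,\; Y \boxplus X) \;=\; d_S(X,Y).
\end{equation*}
Since $X \cap Y = \{0\}$, the right-hand side equals $\dim X + \dim Y$, and the left-hand side equals $\dim Z$, yielding $\dim Z = \dim X + \dim Y$. Next, translating $(X, \{0\})$ by $Y$ and using $\{0\} \boxplus Y = Y$ gives $d_S(Z, Y) = d_S(X, \{0\}) = \dim X$, and symmetrically $d_S(Z, X) = \dim Y$.

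The second step is to turn these distance identities into inclusions. Expanding
\begin{equation*}
	d_S(Z,X) \;=\; \dim Z + \dim X - 2\dim(Z \cap X) \;=\; \dim Y
\end{equation*}
and substituting $\dim Z = \dim X + \dim Y$ collapses to $\dim(Z \cap X) = \dim X$, i.e. $X \subseteq Z$. The analogous calculation with $d_S(Z,Y) = \dim X$ gives $Y \subseteq Z$. Hence $X + Y \subseteq Z$, and because $X \cap Y = \{0\}$ we have $\dim(X+Y) = \dim X + \dim Y = \dim Z$, forcing equality $Z = X + Y$.

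I do not anticipate a real obstacle: the whole argument is a bookkeeping exercise once one realises that the translation-invariance axiom should be applied three times (translate by $X$, by $Y$, and by $X$ starting from $X \boxplus X$) to pin down $\dim Z$ and the two distances $d_S(Z,X)$, $d_S(Z,Y)$. The only subtlety is choosing the right pairs to translate, which the hypothesis $X \cap Y = \{0\}$ makes clean because every subspace distance that appears reduces to a sum of dimensions.
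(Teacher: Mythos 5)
Your proof is correct. Note that the paper itself gives no proof of this lemma---it is quoted from \cite{BEV} (Lemma~8)---so there is nothing internal to compare against; your argument is essentially the standard one from that reference: apply the translation-invariance axiom (iv) of Definition~\ref{D2} three times (with $W = X$ and $W = Y$, using commutativity of $\boxplus$ together with axioms (ii) and (iii)) to obtain $d_S(\{0\},Z) = \dim X + \dim Y$, $d_S(Z,X) = \dim Y$, $d_S(Z,Y) = \dim X$, and then convert these into the inclusions $X, Y \subseteq Z$ and the equality $Z = X+Y$ by dimension counting. All steps check out.
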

A linear code is said to be \emph{closed under intersection} if it is closed with respect to vector space intersection: The subspace $X \cap Y$ is a codeword of $\mathcal{U}$ for any two codewords $X$ and $Y$ if $\mathcal{U}$ is a linear code closed under intersection. The following is a method to construct such class of linear codes.
\begin{theorem}(\cite{BK2}, Theorem~7)
	\label{LT1}
	Suppose there exists a linearly independent subset $\mathcal{E} = \{e_1, \ldots, e_r\}$ of $\mathbb{F}_q^n$ over $\mathbb{F}_q$. Let $\{\mathcal{E}_1, \ldots, \mathcal{E}_m\}$ be a partition of $\mathcal{E}$. Define $\mathcal{E}_{\mathcal{I}} := \bigcup\limits_{i \in \mathcal{I}} \mathcal{E}_i$ for any nonempty subset $\mathcal{I} \subseteq [m]$ and $\mathcal{E}_{\phi} := \phi$. The code $\mathcal{U} = \{\langle \mathcal{E}_{\mathcal{I}}\rangle: \mathcal{I} \subseteq [m]\}$ is linear and closed under intersection.
\end{theorem}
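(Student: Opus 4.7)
The plan is to use the partition $\{\mathcal{E}_1,\ldots,\mathcal{E}_m\}$ to transport the structure of the power set lattice $(\mathcal{P}(m), \triangle, \cup, \cap)$ onto $\mathcal{U}$ via the map $\Phi \colon \mathcal{P}(m) \to \mathcal{U}$ given by $\Phi(\mathcal{I}) = \langle \mathcal{E}_{\mathcal{I}}\rangle$. First I would check that $\Phi$ is a bijection: since $\mathcal{E}$ is linearly independent, any $v \in \langle \mathcal{E}_{\mathcal{I}}\rangle$ has a unique expansion in $\mathcal{E}$ with support in $\mathcal{E}_{\mathcal{I}}$, so $\langle \mathcal{E}_{\mathcal{I}}\rangle = \langle \mathcal{E}_{\mathcal{J}}\rangle$ forces $\mathcal{E}_{\mathcal{I}} = \mathcal{E}_{\mathcal{J}}$, and the partition property then yields $\mathcal{I} = \mathcal{J}$.

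Next I would establish the two lattice identities $\langle \mathcal{E}_{\mathcal{I}}\rangle + \langle \mathcal{E}_{\mathcal{J}}\rangle = \langle \mathcal{E}_{\mathcal{I} \cup \mathcal{J}}\rangle$ and $\langle \mathcal{E}_{\mathcal{I}}\rangle \cap \langle \mathcal{E}_{\mathcal{J}}\rangle = \langle \mathcal{E}_{\mathcal{I} \cap \mathcal{J}}\rangle$. The sum identity is immediate from $\mathcal{E}_{\mathcal{I}} \cup \mathcal{E}_{\mathcal{J}} = \mathcal{E}_{\mathcal{I} \cup \mathcal{J}}$. The intersection identity is the real content: containment $\supseteq$ is trivial, and for $\subseteq$, any $v$ in the intersection admits a unique expansion in the basis $\mathcal{E}$ whose support is forced to lie in $\mathcal{E}_{\mathcal{I}} \cap \mathcal{E}_{\mathcal{J}} = \mathcal{E}_{\mathcal{I} \cap \mathcal{J}}$. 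This immediately gives closure under intersection, settling the second claim of the theorem.

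For the linearity claim, define $\boxplus$ on $\mathcal{U}$ by $\langle \mathcal{E}_{\mathcal{I}}\rangle \boxplus \langle \mathcal{E}_{\mathcal{J}}\rangle := \langle \mathcal{E}_{\mathcal{I} \triangle \mathcal{J}}\rangle$, which is well-defined since $\Phi$ is a bijection. Conditions (i)--(iii) of Definition~\ref{D2} transfer automatically from the abelian group $(\mathcal{P}(m), \triangle)$: associativity and commutativity hold for $\triangle$, the identity corresponds to $\mathcal{E}_{\phi} = \phi$, giving $\{0\} \in \mathcal{U}$, and $\mathcal{I} \triangle \mathcal{I} = \phi$ yields $X \boxplus X = \{0\}$.

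The remaining translation-invariance condition (iv) is the main bookkeeping step. Using that each block $\langle \mathcal{E}_{\mathcal{S}}\rangle$ has dimension $|\mathcal{E}_{\mathcal{S}}|$ (since $\mathcal{E}_{\mathcal{S}} \subseteq \mathcal{E}$ is linearly independent), and combining with the lattice identities above, I would compute
\begin{equation*}
d_S(\langle \mathcal{E}_{\mathcal{I}}\rangle, \langle \mathcal{E}_{\mathcal{J}}\rangle) = |\mathcal{E}_{\mathcal{I} \cup \mathcal{J}}| - |\mathcal{E}_{\mathcal{I} \cap \mathcal{J}}| = \sum_{i \in \mathcal{I} \triangle \mathcal{J}} |\mathcal{E}_i|,
\end{equation*}
where the last equality uses the partition property. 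Invariance then reduces to the set-theoretic identity $(\mathcal{I} \triangle \mathcal{K}) \triangle (\mathcal{J} \triangle \mathcal{K}) = \mathcal{I} \triangle \mathcal{J}$, which is standard. The only delicate point in the whole argument is that the intersection identity genuinely requires $\mathcal{E}$ as a whole to be linearly independent, not merely each block $\mathcal{E}_i$; everything else is routine transfer of structure from $\mathcal{P}(m)$.
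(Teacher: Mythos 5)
Your proof is correct. Note that the paper does not prove this statement itself --- it is imported verbatim from \cite{BK2} (Theorem~7) as a known result --- so there is no internal proof to compare against; judged on its own, your argument (transporting the group $(\mathcal{P}(m),\triangle)$ and the lattice operations $\cup,\cap$ onto $\mathcal{U}$ via $\mathcal{I}\mapsto\langle\mathcal{E}_{\mathcal{I}}\rangle$, with the intersection identity $\langle\mathcal{E}_{\mathcal{I}}\rangle\cap\langle\mathcal{E}_{\mathcal{J}}\rangle=\langle\mathcal{E}_{\mathcal{I}\cap\mathcal{J}}\rangle$ resting on linear independence of all of $\mathcal{E}$) is complete and is precisely the structure the paper itself relies on later, e.g.\ the definition $Y_1\boxplus Y_2:=\langle B_{\mathcal{I}_1\triangle\mathcal{I}_2}\rangle$ in the proof of Theorem~\ref{T3}. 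You also correctly isolate the one genuinely non-routine point, namely that global (not blockwise) linear independence is what forces the support of a vector in the intersection into $\mathcal{E}_{\mathcal{I}\cap\mathcal{J}}$.
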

A linear code thus constructed is also referred to as a \emph{code derived from a partition of a linearly independent set}. The particular case when $r = m = n$ in Theorem~\ref{LT1} is referred to as a \emph{code derived from a fixed basis}, and was first introduced in \cite{PS}. It is known that any linear code closed under intersection can only be constructed in the way described in Theorem~\ref{LT1} \cite[Theorem~8]{BK2}. The lattice structure of such class of linear codes was studied in \cite{BK}.
\begin{theorem}(\cite{BK}, Theorem~18)
	\label{LT2}
	A linear code in $\mathbb{P}_q(n)$ that is closed under intersection forms a distributive sublattice of the linear lattice $\mathbb{P}_q(n)$.
\end{theorem}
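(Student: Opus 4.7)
The plan is to invoke the structure theorem for linear codes closed under intersection, which states that every such code $\mathcal{U}$ arises via Theorem~\ref{LT1}: there exist a linearly independent set $\mathcal{E} = \{e_1,\ldots,e_r\}$ and a partition $\{\mathcal{E}_1,\ldots,\mathcal{E}_m\}$ of $\mathcal{E}$ so that $\mathcal{U} = \{\langle \mathcal{E}_{\mathcal{I}}\rangle : \mathcal{I} \subseteq [m]\}$, with $\mathcal{E}_{\mathcal{I}} = \bigcup_{i \in \mathcal{I}} \mathcal{E}_i$. Granting this representation, the entire theorem reduces to showing that the obvious map $\Phi : \mathcal{P}(m) \to \mathcal{U}$ defined by $\Phi(\mathcal{I}) := \langle \mathcal{E}_{\mathcal{I}}\rangle$ is a lattice isomorphism from the power-set lattice $(\mathcal{P}(m), \cup, \cap)$ onto $(\mathcal{U}, +, \cap)$. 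Since $\mathcal{P}(m)$ is distributive, transport of structure along $\Phi$ will immediately yield that $\mathcal{U}$ is a sublattice of $\mathbb{P}_q(n)$ and is distributive.

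To carry this out, first I would verify that $\Phi$ is a bijection: linear independence of $\mathcal{E}$ ensures that distinct subsets $\mathcal{E}_{\mathcal{I}}$ and $\mathcal{E}_{\mathcal{J}}$ span distinct subspaces, because they are themselves bases of their spans. Next I would check the join: $\Phi(\mathcal{I} \cup \mathcal{J}) = \langle \mathcal{E}_{\mathcal{I} \cup \mathcal{J}}\rangle = \langle \mathcal{E}_{\mathcal{I}}\rangle + \langle \mathcal{E}_{\mathcal{J}}\rangle = \Phi(\mathcal{I}) + \Phi(\mathcal{J})$, which is immediate from the definition of sum of spans. This already shows that $\mathcal{U}$ is closed under $+$; since closure under $\cap$ is a hypothesis, $\mathcal{U}$ is a sublattice of $\mathbb{P}_q(n)$.

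The main obstacle is preservation of meets, i.e.\ proving
\begin{equation*}
\langle \mathcal{E}_{\mathcal{I}}\rangle \cap \langle \mathcal{E}_{\mathcal{J}}\rangle = \langle \mathcal{E}_{\mathcal{I} \cap \mathcal{J}}\rangle.
\end{equation*}
The inclusion $\supseteq$ is trivial. For $\subseteq$, I would take $v$ in the left-hand side and expand it in two ways using the fact that $\mathcal{E}_{\mathcal{I}}$ and $\mathcal{E}_{\mathcal{J}}$ are bases of their respective spans. Writing $v = \sum_{e \in \mathcal{E}_{\mathcal{I}}} \alpha_e e = \sum_{e \in \mathcal{E}_{\mathcal{J}}} \beta_e e$ and using the linear independence of $\mathcal{E}$, the coefficients $\alpha_e$ must vanish for every $e \in \mathcal{E}_{\mathcal{I}} \setminus \mathcal{E}_{\mathcal{J}}$ and symmetrically for $\beta_e$, since $\mathcal{E}_{\mathcal{I}}$ and $\mathcal{E}_{\mathcal{J}}$ are subsets of a common linearly independent set. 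Hence $v$ is supported on $\mathcal{E}_{\mathcal{I}} \cap \mathcal{E}_{\mathcal{J}} = \mathcal{E}_{\mathcal{I} \cap \mathcal{J}}$ (the last equality holds because the $\mathcal{E}_i$ form a partition, so $\mathcal{E}_{\mathcal{I}} \cap \mathcal{E}_{\mathcal{J}} = \bigcup_{i \in \mathcal{I} \cap \mathcal{J}} \mathcal{E}_i$), giving $v \in \langle \mathcal{E}_{\mathcal{I} \cap \mathcal{J}}\rangle$.

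With $\Phi$ established as a lattice isomorphism, distributivity of $\mathcal{U}$ is inherited from $\mathcal{P}(m)$: for any $X = \Phi(\mathcal{I})$, $Y = \Phi(\mathcal{J})$, $Z = \Phi(\mathcal{K})$ in $\mathcal{U}$,
\begin{equation*}
X + (Y \cap Z) = \Phi\bigl(\mathcal{I} \cup (\mathcal{J} \cap \mathcal{K})\bigr) = \Phi\bigl((\mathcal{I} \cup \mathcal{J}) \cap (\mathcal{I} \cup \mathcal{K})\bigr) = (X + Y) \cap (X + Z),
\end{equation*}
completing the argument. The only subtle point is the linear-independence computation in the meet step; everything else is bookkeeping once the structural characterization from \cite{BK2} is invoked.
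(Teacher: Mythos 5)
Your argument is sound, but note that the paper does not prove this statement at all: Theorem~\ref{LT2} is imported verbatim from \cite{BK} (Theorem~18), so there is no in-paper proof to match. What you have written is a legitimate reconstruction that routes everything through the structural characterization of \cite[Theorem~8]{BK2} (every linear code closed under intersection is a code derived from a partition of a linearly independent set), and then transports distributivity along the bijection $\Phi:\mathcal{P}(m)\to\mathcal{U}$. The individual steps check out: injectivity of $\Phi$ follows from linear independence of $\mathcal{E}$ (together with nonemptiness of the partition blocks), the join identity $\langle\mathcal{E}_{\mathcal{I}\cup\mathcal{J}}\rangle=\langle\mathcal{E}_{\mathcal{I}}\rangle+\langle\mathcal{E}_{\mathcal{J}}\rangle$ is immediate, and your coefficient-comparison argument for $\langle\mathcal{E}_{\mathcal{I}}\rangle\cap\langle\mathcal{E}_{\mathcal{J}}\rangle=\langle\mathcal{E}_{\mathcal{I}\cap\mathcal{J}}\rangle$ is correct, with the partition property giving $\mathcal{E}_{\mathcal{I}}\cap\mathcal{E}_{\mathcal{J}}=\mathcal{E}_{\mathcal{I}\cap\mathcal{J}}$. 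Two caveats. First, the entire weight of the proof sits on the structure theorem you invoke, which is itself a nontrivial external result; your proof is only as self-contained as that citation, and you should satisfy yourself that \cite[Theorem~8]{BK2} is not itself proved via the lattice statement you are establishing, or the argument becomes circular. Second, your proof actually delivers more than the theorem asks: it shows $\mathcal{U}$ is lattice-isomorphic to the Boolean lattice $\mathcal{P}(m)$, which is essentially the content of the paper's Corollary~\ref{UDT} and Theorem~\ref{T3} (whose forward direction the paper proves by a different route, via the unique-decomposition theorem rather than via the structure theorem). A proof that did not presuppose the structure theorem would have to verify closure under $+$ and distributivity directly from the axioms of Definition~\ref{D2}, which is the harder and more informative exercise.
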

The maximum size of a linear code closed under intersection was investigated in \cite{BK2} and it revealed that the maximal case is unique.
\begin{theorem}(\cite{BK2})
	\label{LT3}
	The maximum size of a linear code closed under intersection in $\mathbb{P}_q(n)$ is $2^n$. The bound is reached if and only if the code is derived from a fixed basis.
\end{theorem}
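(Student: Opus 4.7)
The plan is to obtain Theorem~\ref{LT3} as a direct consequence of three ingredients already at hand: Theorem~\ref{LT2} (which identifies linear codes closed under intersection with distributive sublattices of $\mathbb{P}_q(n)$), the paper's promised resolution of Problem~\ref{Pr1} (which bounds the size of any such sublattice), and the converse to Theorem~\ref{LT1} cited as \cite[Theorem~8]{BK2} (which forces the structural form of such a code). So I would structure the argument in two parts: first the cardinality bound, then the uniqueness of the extremal case.

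For the upper bound, I would argue as follows. Let $\mathcal{U} \subseteq \mathbb{P}_q(n)$ be a linear code closed under intersection. By Theorem~\ref{LT2}, $\mathcal{U}$ is a distributive sublattice of the linear lattice $(\mathbb{P}_q(n), +, \cap)$. The linear lattice is a geometric lattice of height $n$, so once Problem~\ref{Pr1} is resolved (via the unique-decomposition theorem of Section~\ref{S3}), every distributive sublattice of it has cardinality at most $2^n$. Thus $|\mathcal{U}| \le 2^n$. Conversely, fix any basis $\{e_1, \ldots, e_n\}$ of $\mathbb{F}_q^n$ and apply Theorem~\ref{LT1} with $r = m = n$ and the singleton partition $\mathcal{E}_i = \{e_i\}$; the resulting code $\{\langle\{e_i : i \in \mathcal{I}\}\rangle : \mathcal{I} \subseteq [n]\}$ is linear, closed under intersection, and has exactly $2^n$ codewords, so the bound is tight.

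For the uniqueness of the maximal case, I would invoke the converse of Theorem~\ref{LT1}: any linear code in $\mathbb{P}_q(n)$ closed under intersection arises from a partition $\{\mathcal{E}_1, \ldots, \mathcal{E}_m\}$ of some linearly independent set $\mathcal{E} = \{e_1, \ldots, e_r\} \subseteq \mathbb{F}_q^n$, and every such code has size exactly $2^m$. Since $m \le r \le n$, the equality $2^m = 2^n$ forces $m = n$, which in turn forces $r = n$ (so $\mathcal{E}$ is a basis of $\mathbb{F}_q^n$) and $|\mathcal{E}_i| = 1$ for every $i$ (since the $\mathcal{E}_i$ partition an $n$-element set into $n$ nonempty blocks). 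Hence $\mathcal{U}$ is precisely a code derived from a fixed basis, as claimed.

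The one non-routine step is the upper bound $|\mathcal{U}| \le 2^n$, which rests on the paper's resolution of Problem~\ref{Pr1}; everything else is a bookkeeping consequence of the cited classification theorems. I would therefore present Theorem~\ref{LT3} in Section~\ref{S4}, \emph{after} the unique-decomposition theorem has been established and applied to bound distributive sublattices of geometric lattices, so that this proof reduces to stitching together the three ingredients above.
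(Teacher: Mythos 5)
Your proposal is correct, and its first half coincides with the paper's proof: the upper bound $|\mathcal{U}| \le 2^n$ is obtained exactly as the paper does, by combining Theorem~\ref{LT2} with the resolution of Problem~\ref{Pr1} (Theorem~\ref{T2}). Where you diverge is the characterization of the extremal case. The paper stays entirely inside its lattice framework: it invokes Corollary~\ref{C1}, which says a distributive sublattice of a geometric lattice of height $n$ has size $2^n$ if and only if it contains $n$ atoms of the ambient lattice; for the linear lattice this means $n$ one-dimensional codewords, whence the code is derived from a fixed basis. You instead appeal to the external classification \cite[Theorem~8]{BK2} --- every linear code closed under intersection arises from a partition $\{\mathcal{E}_1,\ldots,\mathcal{E}_m\}$ of a linearly independent set and has size exactly $2^m$ --- and then run the counting argument $2^m = 2^n \Rightarrow m = r = n$ with singleton blocks. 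Both routes are valid; yours is arguably more elementary as bookkeeping, and you also supply an explicit witness for tightness of the bound (which the paper leaves implicit). The trade-off is that the paper's stated purpose in re-proving Theorem~\ref{LT3} is to exhibit it as a consequence of the unique-decomposition machinery alone, and leaning on \cite[Theorem~8]{BK2} for the uniqueness partially forfeits that self-containedness, since that structure theorem is itself the nontrivial classification underlying the original proof in \cite{BK2}.
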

We will exploit the lattice theoretic connection of linear codes further in Section~\ref{S4} using the unique decomposition of geometric distributive lattices.
\section{Uniquely Atomistic Lattices}
\label{UAL}
We will prove modularity of uniquely atomistic lattices in this section via
a series of results. First a few elementary lemmas follow from definition.
\begin{lemma}
	\label{UAL1}
	If $x = \sup S$ and $y = \sup T$ are two elements of a uniquely atomistic lattice $L$, then $x \vee y = \sup (S \cup T)$.
\end{lemma}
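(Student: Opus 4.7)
The plan is to unwind the notation $\sup S$ from Definition~\ref{LDA} and reduce the statement to elementary properties of the join operation. Writing $x = \sup S = \bigvee_{i \in S} x_i$ and $y = \sup T = \bigvee_{j \in T} x_j$, the goal is to show that
\[
x \vee y \;=\; \bigvee_{k \in S \cup T} x_k,
\]
and that $S \cup T$ is in fact the unique subset of $[m]$ realising $x \vee y$ as a join of atoms.

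First I would invoke associativity and commutativity of $\vee$ (valid in any lattice) to regroup and merge the two joins into a single join indexed by the multiset $S \sqcup T$; then idempotency of $\vee$ (that is, $x_k \vee x_k = x_k$, which follows from the definition of the join as a supremum) collapses any atom appearing in both $S$ and $T$ to a single occurrence. This yields
\[
x \vee y \;=\; \bigvee_{i \in S} x_i \;\vee\; \bigvee_{j \in T} x_j \;=\; \bigvee_{k \in S \cup T} x_k,
\]
which is exactly $\sup(S \cup T)$ in the notation of Definition~\ref{LDA}.

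Finally, I would note that unique atomisticity guarantees there is at most one subset of $[m]$ whose join equals $x \vee y$; since $S \cup T$ is such a subset, we have $S_{x \vee y} = S \cup T$, and the stated equation $x \vee y = \sup(S \cup T)$ holds as an identity of subsets of $[m]$, not merely of lattice elements. There is no genuine obstacle here—the lemma is essentially a formal bookkeeping statement—so the only care needed is to make the distinction between $\sup S$ as a lattice element and $S$ as the canonical representing subset, which is what legitimises later arguments that will manipulate such representing subsets directly (for instance via set-theoretic operations like $\cup$ or $\triangle$).
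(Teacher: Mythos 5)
Your proof is correct and follows essentially the same route as the paper: both reduce the claim to associativity (plus commutativity and idempotency) of $\vee$ to merge the two joins into $\bigvee_{k \in S \cup T} x_k$. Your additional remark that unique atomisticity forces $S \cup T$ to be the canonical representing subset of $x \vee y$ is a worthwhile clarification that the paper leaves implicit.
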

\begin{proof}
	By definition, if the set of all atoms in $L$ is $\{x_1, \ldots, x_m\}$ then $x = \bigvee\limits_{i \in S} x_i$ and $y = \bigvee\limits_{j \in T} x_j$. Since the join-operation $\vee$ is associative, it follows that $x \vee y = \bigvee\limits_{k \in S \cup T} x_k = \sup (S \cup T)$.
\end{proof}
\begin{lemma}
	\label{UAL2}
	Suppose $L$ is an uniquely atomistic lattice. For any distinct $x, y \in L$ we must have
	\begin{equation*}
		x \wedge y = \sup (S_1 \cap S_2),
	\end{equation*}
	where $x = \sup S_1$ and $y = \sup S_2$.
\end{lemma}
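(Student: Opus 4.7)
The plan is to prove the two inequalities $\sup(S_1 \cap S_2) \preceq x \wedge y$ and $x \wedge y \preceq \sup(S_1 \cap S_2)$ separately, using Lemma~\ref{UAL1} together with the uniqueness of atom-decompositions as the main tool.

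For the easy direction, I would first observe that $\sup$ is monotone with respect to set inclusion in a uniquely atomistic lattice. Since $S_1 \cap S_2 \subseteq S_1$, Lemma~\ref{UAL1} gives $\sup(S_1 \cap S_2) \vee \sup S_1 = \sup S_1 = x$, so $\sup(S_1 \cap S_2) \preceq x$. The same argument yields $\sup(S_1 \cap S_2) \preceq y$, and therefore $\sup(S_1 \cap S_2) \preceq x \wedge y$ since $x \wedge y$ is the greatest lower bound.

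For the reverse direction, I would use the unique atomistic property to write $x \wedge y = \sup S$ for some unique $S \subseteq [m]$. The key step is the following general fact, which is where the hypothesis of \emph{unique} atomicity (as opposed to mere atomicity) really bites: if $\sup A \preceq \sup B$ in $L$, then $A \subseteq B$. This is an immediate consequence of Lemma~\ref{UAL1}, since $\sup A \preceq \sup B$ implies $\sup A \vee \sup B = \sup B$, i.e.\ $\sup(A \cup B) = \sup B$, and by the uniqueness of the decomposition one concludes $A \cup B = B$, hence $A \subseteq B$. Applying this to $x \wedge y \preceq x$ and $x \wedge y \preceq y$ gives $S \subseteq S_1$ and $S \subseteq S_2$, so $S \subseteq S_1 \cap S_2$. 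Monotonicity of $\sup$ then yields $x \wedge y = \sup S \preceq \sup(S_1 \cap S_2)$.

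Antisymmetry of $\preceq$ combines the two directions to finish the argument. I do not expect a real obstacle here: the entire proof rests on the uniqueness clause in Definition~\ref{LDA}, which turns the trivial monotonicity inequality into an equality via Lemma~\ref{UAL1}. The only subtle point worth flagging explicitly is that the containment lemma $\sup A \preceq \sup B \Rightarrow A \subseteq B$ genuinely fails in atomistic lattices that are not uniquely atomistic (for instance in $M_3$), so this step is where the hypothesis is used in an essential way.
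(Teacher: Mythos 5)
Your proposal is correct and follows essentially the same route as the paper: both arguments rest on Lemma~\ref{UAL1} combined with the uniqueness of the atom-decomposition to show first that the index set of $x \wedge y$ is contained in $S_1 \cap S_2$, and then that $\sup(S_1 \cap S_2)$ is a lower bound of $x$ and $y$. The only difference is cosmetic --- you finish by antisymmetry of $\preceq$, while the paper phrases the last step as a contradiction with $x \wedge y$ being the \emph{greatest} lower bound.
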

\begin{proof}
	By definition of a lattice, $x \vee (x \wedge y) = x$. We can write $x \wedge y = \sup S_3$ for some finite set $S_3$ as $L$ is uniquely atomistic. That $x = \sup S_1$ and $y = \sup S_2$ implies that $\bigvee\limits_{i \in S_1 \cup S_3} x_i = \bigvee\limits_{j \in S_1} x_j$ by Lemma~\ref{UAL1}. By unique atomisticity, we get $S_3 \subset S_1$ since $x \ne x \wedge y$. Similarly, $S_3 \subset S_2$. Thus $S_3 \subseteq S_1 \cap S_2$.
	
	Assume that $S_3 \ne S_1 \cap S_2$, i.e. $S_3 \subset S_1 \cap S_2$. But that means $\sup (S_1 \cap S_2)$ is a lower bound of $x$ and $y$ and $x \wedge y \prec \sup (S_1 \cap S_2)$, a contradiction. Thus the statement follows.
\end{proof}
\begin{lemma}
	\label{UAL3}
	Suppose $x = \sup S$ and $y = \sup T$ are elements of a uniquely atomistic lattice $L$. If $x \prec y$ then $S \subset T$.
\end{lemma}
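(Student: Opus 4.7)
The plan is to reduce the lemma to Lemma~\ref{UAL2} via the standard lattice identity $x \preceq y \iff x \wedge y = x$. Since $x \prec y$ means $x \preceq y$ together with $x \neq y$, I can both invoke Lemma~\ref{UAL2} (which requires distinct elements) and exploit the equation $x \wedge y = x$.

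First I would write $x \wedge y = x = \sup S$ using $x \preceq y$. Next, applying Lemma~\ref{UAL2} to the distinct pair $x, y$ with their unique atom-decompositions $x = \sup S$ and $y = \sup T$, I obtain $x \wedge y = \sup(S \cap T)$. Combining these two expressions yields the equality $\sup S = \sup(S \cap T)$ inside the uniquely atomistic lattice $L$. By the very uniqueness in Definition~\ref{LDA}, two subsets of atom-indices that have the same join must coincide, so $S = S \cap T$, which is the same as $S \subseteq T$. Finally, if the inclusion were not strict we would have $S = T$ and hence $x = \sup S = \sup T = y$, contradicting $x \ne y$; therefore $S \subset T$, as required.

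The step that deserves the most care is the invocation of Lemma~\ref{UAL2}, since it presupposes that its two arguments are distinct; this is exactly why the hypothesis $x \prec y$ (as opposed to $x \preceq y$) is essential. One should also note the harmless degenerate case $x = O$, for which $S = \emptyset$ (the empty join convention from Definition~\ref{LDA}); the inclusion $\emptyset \subset T$ still holds because $y \succ O$ forces $T \neq \emptyset$. No genuine obstacle is expected beyond this bookkeeping, since the argument is essentially a one-line consequence of the preceding two lemmas.
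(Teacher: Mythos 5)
Your proof is correct and follows essentially the same route as the paper's: use $x \prec y$ to get $x \wedge y = x$, apply Lemma~\ref{UAL2} to obtain $\sup S = \sup(S \cap T)$, invoke unique atomisticity to conclude $S = S \cap T$, and derive strictness from $x \neq y$. Your extra care about the distinctness hypothesis of Lemma~\ref{UAL2} and the degenerate case $x = O$ is sound bookkeeping that the paper leaves implicit.
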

\begin{proof}
	As $x \prec y$, we have $x \wedge y = x$. From unique atomisticity of $L$ and Lemma~\ref{UAL2} it can be observed that $S \cap T = S$. The rest follows because $S \ne T$.
\end{proof}
We will now establish that modularity is inherent in any finite uniquely atomistic lattice.
\begin{theorem}
	\label{UAT}
	A finite lattice $L$ is modular if $L$ is uniquely atomistic.
\end{theorem}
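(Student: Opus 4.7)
The plan is to verify the equivalent formulation of modularity stated in Definition~\ref{LD5}: for a finite lattice $L$, modularity is equivalent to the implication
\begin{equation*}
x \preceq z \;\Rightarrow\; x \vee (y \wedge z) = (x \vee y) \wedge z.
\end{equation*}
The earlier lemmas (Lemma~\ref{UAL1}--Lemma~\ref{UAL3}) translate the lattice operations $\vee$, $\wedge$, $\preceq$ on $L$ into the set-theoretic operations $\cup$, $\cap$, $\subseteq$ on the corresponding index subsets of $[m]$. Since $\mathcal{P}(m)$ is itself distributive (hence modular), the proof should reduce to the ordinary set-distributive law.

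First I would fix arbitrary $x, y, z \in L$ with $x \preceq z$ and write $x = \sup S_x$, $y = \sup S_y$, $z = \sup S_z$ as guaranteed by unique atomisticity. Disposing of the trivial case $x = z$, Lemma~\ref{UAL3} gives $S_x \subseteq S_z$. Then Lemma~\ref{UAL2} yields $y \wedge z = \sup (S_y \cap S_z)$, and Lemma~\ref{UAL1} gives
\begin{equation*}
x \vee (y \wedge z) = \sup \bigl(S_x \cup (S_y \cap S_z)\bigr).
\end{equation*}
Dually, applying Lemma~\ref{UAL1} first and Lemma~\ref{UAL2} afterwards yields
\begin{equation*}
(x \vee y) \wedge z = \sup \bigl((S_x \cup S_y) \cap S_z\bigr).
\end{equation*}

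To finish, I would invoke the set-theoretic distributive law $(S_x \cup S_y) \cap S_z = (S_x \cap S_z) \cup (S_y \cap S_z)$, substitute $S_x \cap S_z = S_x$ (using $S_x \subseteq S_z$), and conclude that the two index sets coincide. Since each element of $L$ has a \emph{unique} expression as $\sup$ of its atoms, equality of the atom-index sets forces equality of the two lattice elements, which is the required modular identity.

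There is no real obstacle: the argument is essentially a faithful translation from $L$ into $\mathcal{P}(m)$, where distributivity is immediate. The only point that needs care is that the \emph{uniqueness} half of unique atomisticity is what licenses the final lift from set equality back to lattice equality. As a sanity check one could also argue directly via Theorem~\ref{TL1} by excluding any sublattice isomorphic to $N_5$: in $N_5$ the comparable pair $a \prec b$ satisfies $a \vee c = b \vee c$ and $a \wedge c = b \wedge c$, which by Lemmas~\ref{UAL1} and \ref{UAL2} would force $S_a \cup S_c = S_b \cup S_c$ and $S_a \cap S_c = S_b \cap S_c$, i.e.\ $S_a = S_b$, contradicting $a \ne b$.
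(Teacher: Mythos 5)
Your proposal is correct, but your primary route is genuinely different from the paper's. The paper proves Theorem~\ref{UAT} by forbidding $N_5$ (via Theorem~\ref{TL1}): it assumes a pentagon sublattice with $a_2 \prec a_1$ and common join/meet with $b$, uses Lemmas~\ref{UAL1} and~\ref{UAL2} to obtain $\mathcal{I}_1 \cup \mathcal{K} = \mathcal{I}_2 \cup \mathcal{K}$ and $\mathcal{I}_1 \cap \mathcal{K} = \mathcal{I}_2 \cap \mathcal{K}$, and deduces $\mathcal{I}_1 \backslash \mathcal{I}_2 = \phi$, contradicting Lemma~\ref{UAL3} --- this is exactly your closing ``sanity check,'' so that remark reproduces the paper's argument. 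Your main argument instead verifies the modular identity of Definition~\ref{LD5} directly by transporting $\vee$, $\wedge$, $\preceq$ to $\cup$, $\cap$, $\subseteq$ on the atom-index sets and invoking the distributive law in $\mathcal{P}(m)$. This is valid (the degenerate cases sidestepped by the hypotheses of Lemmas~\ref{UAL2} and~\ref{UAL3}, such as $x = z$ or $x \vee y = z$, are immediate) and in fact buys more than the paper's route: the identical translation yields the full distributive laws, which the paper only establishes later in Theorem~\ref{T1} by separately excluding $M_3$. One small correction of emphasis: in your final step, uniqueness of the decomposition is not what is needed --- equal index sets give literally the same join of atoms --- rather, unique atomisticity is what licenses the translations in Lemmas~\ref{UAL2} and~\ref{UAL3} (meet to intersection, order to containment) on which the whole reduction rests.
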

\begin{proof}
	According to Theorem~\ref{TL1} it is enough to show that there exists no sublattice of $L$ which is isomorphic to $N_5$. Let the set of all atoms in $L$ be $\{x_1, \ldots, x_m\}$. We proceed by contradiction.
	\begin{figure}[t]
		\centering
		\begin{tikzpicture}[scale=0.6]
			\node (A1) at (0,2) {$\mathcal{M}$};
			\node (A2) at (-2,1) {$a_1$};
			\node (A3) at (-2,-1) {$a_2$};
			\node (A4) at (2,0) {$b$};
			\node (A5) at (0,-2) {$\mu$};
			\draw (A5) -- (A3) -- (A2) -- (A1) -- (A4) -- (A5);
		\end{tikzpicture}
		\caption{$N_5$ lattice}
		\label{F0}
	\end{figure}
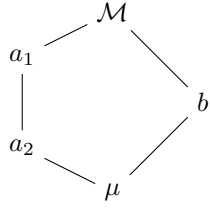		
	
	Assume that there exists a sublattice of $L$ isomorphic to $N_5$ as shown in Fig.~\ref{F0}. By unique atomisticity of $L$, we can write the following for some fixed subsets $\mathcal{I}_1, \mathcal{I}_2, \mathcal{K} \subseteq [m]$:
	\begin{equation*}
		a_1 = \bigvee\limits_{i \in \mathcal{I}_1} x_i = \sup \mathcal{I}_1; \quad a_2 = \bigvee\limits_{j \in \mathcal{I}_2} x_j = \sup \mathcal{I}_2; \quad b = \bigvee\limits_{k \in \mathcal{K}} x_k = \sup \mathcal{K}.
	\end{equation*}
	Since $\mathcal{M} = a_1 \vee b$ and $\mu = a_2 \wedge b$, we obtain from Lemmas~\ref{UAL1} and \ref{UAL2} that $\mathcal{M} = \sup (\mathcal{I}_1 \cup \mathcal{K})$ and $\mu = \sup (\mathcal{I}_2 \cap \mathcal{K})$. Similarly, from $\mathcal{M} = a_2 \vee b$ and $\mu = a_1 \wedge b$ we yield $\mathcal{M} = \sup (\mathcal{I}_2 \cup \mathcal{K}), \mu = \sup (\mathcal{I}_1 \cap \mathcal{K})$. Combining both, we have the following:
	\begin{eqnarray}
		\mathcal{I}_1 \cup \mathcal{K} &=& \mathcal{I}_2 \cup \mathcal{K}; \label{UAE1} \\
		\mathcal{I}_1 \cap \mathcal{K} &=& \mathcal{I}_2 \cap \mathcal{K}. \label{UAE2}
	\end{eqnarray}
	From \eqref{UAE2} it follows that $(\mathcal{I}_1 \backslash \mathcal{I}_2) \cap \mathcal{K} = \phi$. Since $a_2 \prec a_1$, thus $\mathcal{I}_2 \subset \mathcal{I}_1$ by Lemma~\ref{UAL3}; i.e. $\mathcal{I}_1 \backslash \mathcal{I}_2$ is nonempty. For any $l \in \mathcal{I}_1 \backslash \mathcal{I}_2$ we must have $l \in \mathcal{I}_2 \cup \mathcal{K}$ from \eqref{UAE1}, i.e. $l \in \mathcal{K}$. This implies that $\mathcal{I}_1 \backslash \mathcal{I}_2 \subseteq \mathcal{K}$, or in other words $\mathcal{I}_1 \backslash \mathcal{I}_2 = (\mathcal{I}_1 \backslash \mathcal{I}_2) \cap \mathcal{K} = \phi$. This is a contradiction to $\mathcal{I}_2 \subset \mathcal{I}_1$, hence proved.
\end{proof}
\begin{corollary}
	\label{UAC}
	Any finite uniquely atomistic lattice is geometric.
\end{corollary}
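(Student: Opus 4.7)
The plan is to derive this corollary as an immediate consequence of Theorem~\ref{UAT} together with the definitional unpacking of the terms involved. Recall that by Definition~\ref{LD6}, a finite lattice is geometric precisely when it is both semimodular and atomistic. So the task reduces to verifying these two properties for an arbitrary finite uniquely atomistic lattice $L$.

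First I would observe that atomisticity is built into the hypothesis: Definition~\ref{LDA} requires each element of $L$ to be expressible as a join of atoms (in fact, uniquely so), which is strictly stronger than the defining property of an atomistic lattice. Thus this half of the geometric-lattice definition is essentially free. Next I would invoke Theorem~\ref{UAT}, which asserts that any finite uniquely atomistic lattice is modular. Finally, I would recall from Definition~\ref{LD5} that a lattice is modular exactly when both it and its dual are semimodular; in particular, modularity implies semimodularity. Combining these two observations shows that $L$ is both atomistic and semimodular, hence geometric.

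There is no real obstacle here, since the corollary is a direct bookkeeping consequence of the preceding theorem and the standard hierarchy of definitions; the only thing worth stating explicitly in the written proof is the implication \emph{modular $\Rightarrow$ semimodular}, which is clear from Definition~\ref{LD5} but deserves mention so that the chain \emph{uniquely atomistic $\Rightarrow$ atomistic and modular $\Rightarrow$ atomistic and semimodular $\Rightarrow$ geometric} is transparent to the reader.
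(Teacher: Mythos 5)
Your proof is correct and follows exactly the same route as the paper, which simply cites Definition~\ref{LD6} and Theorem~\ref{UAT}; you have merely made explicit the bookkeeping steps (uniquely atomistic implies atomistic, and modular implies semimodular via Definition~\ref{LD5}) that the paper leaves implicit.
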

\begin{proof}
	Follows directly from Definition~\ref{LD6} and Theorem~\ref{UAT}.
\end{proof}
\begin{remark}
	The converse of the statement of Theorem~\ref{UAT} is not true, i.e. a modular lattice need not always be uniquely atomistic. E.g. the $M_3$ lattice is not uniquely atomistic. From Fig.~\ref{F1} one can see that $\mathbb{F}_2^2 = \langle \{0, 1\}\rangle + \langle \{1, 0\}\rangle = \langle \{0, 1\}\rangle + \langle \{1, 1\}\rangle$, where $+$ denotes the usual vector space addition.
\end{remark}
Theorem~\ref{UAT} brings us to a position where we can prove the unique-decomposition theorem in the next section.

\section{The Unique-decomposition Theorem}
\label{S3}
In this section we will establish the unique criterion needed for an atomistic lattice to be distributive and use that to characterize finite geometric distributive lattices. Atoms of geometric distributive lattices play an important part in the unique characterization. The first step towards that is observing that the greatest lower bound of two atoms in any lattice is the least element of that lattice.
\begin{lemma}
	\label{L1}
	For any two distinct atoms $x_1, x_2$ in a lattice $(L, \vee, \wedge)$, we have $x_1 \wedge x_2 = O$, where $O$ is the least element of $L$.
\end{lemma}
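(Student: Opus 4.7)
The plan is to use the defining property of an atom, namely that an atom $x_i$ covers the least element $O$, so that nothing lies strictly between $O$ and $x_i$. Since $x_1 \wedge x_2$ is by definition a lower bound of $x_1$, we have $x_1 \wedge x_2 \preceq x_1$. The covering relation $O \lessdot x_1$ then forces $x_1 \wedge x_2$ to equal either $O$ or $x_1$, since no element fits strictly in between.

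To rule out the case $x_1 \wedge x_2 = x_1$, I would argue by symmetry through $x_2$. If $x_1 \wedge x_2 = x_1$, then $x_1 \preceq x_2$. Combined with the fact that $x_1 \ne O$ (atoms, being covers of $O$, are distinct from $O$ by antisymmetry) and the covering relation $O \lessdot x_2$, this would force $x_1 = x_2$, contradicting the assumption that the two atoms are distinct. Hence $x_1 \wedge x_2 = O$.

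There is no real obstacle here; the statement is essentially immediate from the definitions of ``atom'' and ``cover'' together with antisymmetry of $\preceq$. The only care needed is to notice that we are not assuming $L$ is semimodular, atomistic, or uniquely atomistic, so the proof must rely solely on the lattice axioms and on the definition of an atom as an element covering $O$. The argument above uses nothing more.
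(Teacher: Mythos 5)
Your proof is correct and follows essentially the same route as the paper: both use that $x_1 \wedge x_2 \preceq x_1$ together with the cover $O \lessdot x_1$ to force $x_1 \wedge x_2 \in \{O, x_1\}$, and then derive $x_1 = x_2$ from the remaining case, contradicting distinctness. The paper phrases the second step symmetrically (concluding $x_1 \wedge x_2 = x_2$ as well) rather than via $x_1 \preceq x_2$ and the cover $O \lessdot x_2$, but this is an immaterial difference.
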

\begin{proof}
	Suppose $x_1 \wedge x_2 \neq O$. By definition, $x_1 \wedge x_2 \preceq x_1$. As $x_1$ is an atom in $L$, hence $O \lessdot x_1$, which means $x_1 \wedge x_2 = x_1$ by our supposition. Similarly we obtain $x_1 \wedge x_2 = x_2$. Since $x_1, x_2$ are distinct, this is a contradiction and the result follows.
\end{proof}

Next we will prove the generalization of the above lemma for any finite number of atoms in a distributive lattice.
\begin{lemma}
	\label{L2}
	Let $\{x_1, \ldots, x_m\}$ be a set of atoms in a distributive lattice $M$. Then for all $i \in [m]$,
	\begin{equation*}
		x_i \wedge (\bigvee\limits_{j \in [m] \backslash \{i\}} x_j) = O.
	\end{equation*}
\end{lemma}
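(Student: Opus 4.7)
The plan is to distribute the meet with $x_i$ across the large join and then collapse each binary meet using Lemma~\ref{L1}.

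First, I would extend the binary distributive identity of Definition~\ref{LD7} to an arbitrary finite join, proving
$$x_i \wedge \bigvee_{j \in [m]\setminus\{i\}} x_j \;=\; \bigvee_{j \in [m]\setminus\{i\}} (x_i \wedge x_j)$$
by a short induction on the number of joinands. In the inductive step, I would isolate a single term $x_k$ from the right-hand join, write the join as $x_k \vee \bigl(\bigvee_{j \neq i,k} x_j\bigr)$, apply the binary distributive law once, and invoke the inductive hypothesis on the shorter join. This is routine bookkeeping; the essential ingredient is just Definition~\ref{LD7}.

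Next, since the atoms $x_1, \ldots, x_m$ are pairwise distinct, Lemma~\ref{L1} gives $x_i \wedge x_j = O$ for every $j \in [m]\setminus\{i\}$. Substituting into the displayed equation reduces the right-hand side to
$$\bigvee_{j \in [m]\setminus\{i\}} O \;=\; O,$$
where the final equality holds because $O$ is the least element of $M$ and therefore $O \vee O = O$ (formally, another trivial induction). This gives the desired conclusion.

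I do not expect any real obstacle. The only things that need a brief mention are the degenerate cases: for $m=1$ the join on the left is empty and equals $O$ by the usual convention, while for $m=2$ the statement is exactly Lemma~\ref{L1}. Note that distributivity is used essentially, so the argument genuinely fails in a non-distributive modular lattice such as $M_3$, where the meet of one atom with the join of the other two equals the top element rather than $O$; this is consistent with the fact that the lemma is where distributivity, rather than mere modularity, starts to play its role in the forthcoming unique-decomposition theorem.
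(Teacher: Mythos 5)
Your proposal is correct and follows essentially the same route as the paper: distribute the meet of $x_i$ over the join (the paper simply invokes distributivity for the finite join where you spell out the induction) and then collapse each $x_i \wedge x_j$ to $O$ via Lemma~\ref{L1}. Your extra remarks on the degenerate cases and the failure in $M_3$ are accurate but not needed.
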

\begin{proof}
	By distributivity in $M$ we can write,
	\begin{equation*}
		x_i \wedge (\bigvee\limits_{j \in [m] \backslash \{i\}} x_j) = \bigvee\limits_{j \in [m] \backslash \{i\}} (x_i \wedge x_j).
	\end{equation*}
	According to Lemma~\ref{L1}, $x_i \wedge x_j = O$ for $i \ne j$, and the statement is proved.
\end{proof}

The height of join of two atoms in a modular lattice (if the height function is defined) is the sum of heights of the individual atoms, as illustrated in the following lemma.
\begin{lemma}
	\label{L3}
	Suppose $L$ is a modular lattice and $h$ is the height function defined on $L$. For any two atoms $x$ and $y$ in $L$ the following holds true:
	\begin{equation*}
		h(x \vee y) = h(x) + h(y).
	\end{equation*}
\end{lemma}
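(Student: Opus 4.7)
The plan is to deduce the identity directly from the valuation property of the height function, which is available on any finite modular lattice by Theorem~\ref{TL3}. Specifically, Theorem~\ref{TL3} tells us that $h$ is a positive isotone valuation on $L$, so that for every pair of elements $u, v \in L$ we have the identity $h(u \vee v) + h(u \wedge v) = h(u) + h(v)$. Applying this to the atoms $x$ and $y$ reduces the problem to computing $h(x \wedge y)$.

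Here the assumption that $x$ and $y$ are atoms does all the remaining work. Assuming $x \ne y$ (the equality case is trivial since the statement $h(x) = 2h(x)$ fails for an atom, so the lemma tacitly concerns distinct atoms, which is also the only substantive case), Lemma~\ref{L1} gives $x \wedge y = O$. Since the height function satisfies $h(O) = 0$ by definition, the valuation identity collapses to $h(x \vee y) = h(x) + h(y)$, which is exactly what we want.

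There is no serious obstacle; the proof is essentially a one-line substitution once Theorem~\ref{TL3} and Lemma~\ref{L1} are invoked. The only minor subtlety to be careful about is to confirm that $h$ is genuinely a valuation on the modular lattice $L$ (rather than merely isotone), which is the content of Theorem~\ref{TL3} and is what makes modularity essential: in a merely semimodular lattice one would only get the inequality $h(x \vee y) + h(x \wedge y) \le h(x) + h(y)$, so the equality in the conclusion of the lemma could fail. Thus the whole argument can be presented in two or three lines, and the structure is: invoke Theorem~\ref{TL3}, invoke Lemma~\ref{L1}, arithmetic.
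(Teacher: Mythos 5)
Your proof is correct and follows essentially the same route as the paper: invoke the valuation property of the height function on a modular lattice (Theorem~\ref{TL3}), apply Lemma~\ref{L1} to get $x \wedge y = O$, and use $h(O) = 0$. Your remark that Lemma~\ref{L1} requires the atoms to be distinct (a hypothesis the paper silently glosses over when applying it) is a small but fair point of added care.
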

\begin{proof}
	The height function $h$ is a valuation which according to Definition~\ref{LD8} implies that $h(x \vee y) = h(x) + h(y) - h(x \wedge y)$. As $x$ and $y$ are atoms in $L$, Lemma~\ref{L1} dictates that $x \wedge y = O$. That $h(O) = h(x \wedge y) = 0$ which proves the rest.
\end{proof}

We are now going to generalize Lemma~\ref{L3} for any $m \ge 2$ number of atoms if the lattice is also distributive.
\begin{lemma}
	\label{L3a}
	Consider a set $\{x_1, \ldots, x_m\}$ of $m \ge 2$ atoms in a distributive lattice $L$. If $h$ is the height function defined on $L$ then
	\begin{equation*}
		h(\bigvee\limits_{i \in [m]} x_i) = \sum\limits_{i \in [m]}h(x_i).
	\end{equation*}
\end{lemma}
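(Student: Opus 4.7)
The plan is to prove this by induction on $m$, with the base case $m = 2$ being exactly Lemma~\ref{L3}. The inductive step will combine the valuation property of $h$ with Lemma~\ref{L2} to eliminate the meet term.

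For the inductive step, suppose the result holds for $m-1$ atoms. Set $y := \bigvee_{i=1}^{m-1} x_i$. Since every distributive lattice is modular (as noted just after Definition~\ref{LD7}) and $L$ is finite, Theorem~\ref{TL3} tells us $h$ is a valuation on $L$, so
\begin{equation*}
    h(y \vee x_m) = h(y) + h(x_m) - h(y \wedge x_m).
\end{equation*}
By Lemma~\ref{L2} applied to the atoms $\{x_1,\ldots,x_m\}$ with index $i = m$, we have $y \wedge x_m = x_m \wedge \bigvee_{j \in [m]\setminus\{m\}} x_j = O$, and thus $h(y \wedge x_m) = h(O) = 0$. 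Invoking the inductive hypothesis on $y$ then yields
\begin{equation*}
    h\bigl(\bigvee_{i \in [m]} x_i\bigr) = h(y) + h(x_m) = \sum_{i=1}^{m-1} h(x_i) + h(x_m) = \sum_{i \in [m]} h(x_i),
\end{equation*}
completing the induction.

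There is no real obstacle here: the argument is a routine induction, and the two ingredients (distributivity forcing the pairwise meets to vanish via Lemma~\ref{L2}, and modularity forcing $h$ to be a valuation via Theorem~\ref{TL3}) have already been set up in the preceding lemmas. The only small care required is to note that distributivity of $L$ is inherited by the join $y$ (which we already use implicitly through Lemma~\ref{L2}), and that the finiteness hypothesis needed for Theorem~\ref{TL3} is in force since the paper restricts to finite lattices throughout.
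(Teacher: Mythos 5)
Your proof is correct and follows essentially the same route as the paper's: induction on $m$ with base case Lemma~\ref{L3}, using the valuation property of $h$ and Lemma~\ref{L2} to kill the meet term. The only difference is that you make explicit the justification (distributivity implies modularity, hence Theorem~\ref{TL3} applies) that the paper leaves implicit, which is a welcome bit of extra care.
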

\begin{proof}
	The proof is by induction. The base case for $m = 2$ is covered by Lemma~\ref{L3}. Suppose the statement holds true for any $(m-1)$ atoms in $L$, i.e., $h(\bigvee\limits_{i \in [m-1]} x_i) = \sum\limits_{i \in [m-1]}h(x_i)$. Since the join operation $\vee$ is associative over the elements of $L$, we can write $\bigvee\limits_{i \in [m]} x_i = x_m \vee (\bigvee\limits_{j \in [m-1]} x_j)$. The height of $\bigvee\limits_{i \in [m]} x_i$ can therefore be expressed as:
	\begin{equation*}
		h(\bigvee\limits_{i \in [m]} x_i) = h(x_m) + h(\bigvee\limits_{i \in [m-1]} x_i) - h(x_m \wedge (\bigvee\limits_{i \in [m-1]} x_i)).
	\end{equation*}
	As $h(x_m \wedge (\bigvee\limits_{i \in [m-1]} x_i)) = 0$ according to Lemma~\ref{L2}, the rest follows from the induction hypothesis.
\end{proof}

Consequence of Lemma~\ref{L3a} is that the number of atoms in a distributive sublattice of a finite modular lattice of height $n$ cannot exceed $n$. We formally state the result.
\begin{proposition}
	\label{P2}
	If $L$ is a finite modular lattice of height $n$ then the number of atoms in a distributive sublattice $M$ of $L$ can be at most $n$. The maximum number of atoms is reached if and only if all atoms of $M$ are also atoms in $L$ and the greatest element of $L$ is join of the atoms in $M$.
\end{proposition}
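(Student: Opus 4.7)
The plan is to estimate the $L$-height of the join of all atoms of $M$ and compare the result with $h_L(I_L)=n$. Let $x_1,\ldots,x_m$ be the atoms of $M$ and let $O_M$ denote the least element of $M$, which a priori need not coincide with $O_L$. Because $M$ is a sublattice, every join and meet computed inside $M$ agrees with the corresponding operation in $L$, so Lemmas~\ref{L1} and \ref{L2}, applied inside the distributive $M$, yield identities that can be read off directly in $L$. Moreover, since $L$ is modular, Theorem~\ref{TL3} ensures that $h_L$ is a positive isotone valuation on $L$, giving us access to the valuation identity.

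First I would establish by induction on $m$ the exact formula
\[
h_L\!\left(\bigvee_{i=1}^{m} x_i\right) \;=\; \sum_{i=1}^{m} h_L(x_i) \;-\; (m-1)\,h_L(O_M).
\]
The base $m=2$ follows from the valuation identity and $x_1\wedge x_2=O_M$ (Lemma~\ref{L1} in $M$). The inductive step applies the valuation identity to $x_m$ and $\bigvee_{i<m}x_i$ and uses Lemma~\ref{L2} inside $M$ to evaluate $x_m\wedge(\bigvee_{i<m}x_i)=O_M$. Because each $x_i$ strictly covers $O_M$ in $M$, hence $O_M\prec x_i$ in $L$, positivity of $h_L$ forces $h_L(x_i)\ge h_L(O_M)+1$. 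Plugging this into the displayed identity yields $h_L(\bigvee x_i)\ge m+h_L(O_M)$.

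Combining with $h_L(\bigvee x_i)\le n$ gives $m\le n-h_L(O_M)\le n$, which is the desired bound. For the equality case, $m=n$ forces every inequality in the chain to be tight: $h_L(O_M)=0$ (so $O_M=O_L$), each $h_L(x_i)=1$ (so every atom of $M$ is an atom of $L$), and $h_L(\bigvee x_i)=n$ (so $\bigvee x_i=I_L$). Conversely, if all atoms of $M$ are atoms of $L$ and $\bigvee x_i=I_L$, then $O_M$ lies strictly below an atom of $L$ and must equal $O_L$, and substituting $h_L(O_M)=0$ and $h_L(x_i)=1$ into the displayed identity gives $n=h_L(I_L)=m$. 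The main obstacle I anticipate is the careful bookkeeping of the case $O_M\ne O_L$ in the bound, and in particular ensuring that the meets produced by Lemma~\ref{L2} inside $M$ can be substituted verbatim into the $L$-valuation identity, a step that rests crucially on $M$ being a sublattice and not merely a subposet.
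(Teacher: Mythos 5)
Your proof is correct and follows the same basic strategy as the paper's --- bound the $L$-height of the join of the atoms of $M$ by $n = h_L(I)$ and use the valuation identity together with Lemmas~\ref{L1} and~\ref{L2} to turn that height into a count of atoms. The execution differs in one respect: the paper applies Lemma~\ref{L3a} to the intrinsic height function $h_M$ of the distributive sublattice (so each atom contributes exactly $1$) and then invokes the comparison $h_M(y)\le h_L(y)$ for $y\in M$, whereas you work with $h_L$ throughout and prove the exact identity $h_L(\bigvee_{i} x_i)=\sum_{i} h_L(x_i)-(m-1)\,h_L(O_M)$, which carries an explicit correction term for the possibility that the least element $O_M$ of $M$ sits strictly above $O_L$. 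Your version is the more careful of the two on precisely this point: it makes transparent why the meets supplied by Lemmas~\ref{L1} and~\ref{L2} equal $O_M$ rather than $O_L$ (valid because $M$ is a sublattice, so its meets agree with those of $L$), it yields the marginally sharper bound $m\le n-h_L(O_M)$, and it lets the equality analysis ($h_L(O_M)=0$, each $h_L(x_i)=1$, $\bigvee_i x_i=I_L$) fall out by inspecting which inequalities are tight --- the same conclusion the paper reaches via its maximal-chain argument.
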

\begin{proof}
	Suppose $\{x_1, \ldots, x_m\}$ be the set of atoms in $M$. If $I$ is the greatest element in $L$ then certainly $\bigvee\limits_{i \in [m]} x_i \preceq I$. As $h$ is an isotone valuation, Definition~\ref{LD8}(ii) implies that $h_L(\bigvee\limits_{i \in [m]} x_i) \le h_L(I)$. Observe that $h_L(y) \ge h_M(y)$ for all $y \in M$. Since $h_L(I) = n$, by Lemma~\ref{L3a} we have the following inequality that proves the claim:
	\begin{equation}
		\label{E0}
		m = \sum\limits_{i \in [m]} h_M(x_i) = h_M(\bigvee\limits_{i \in [m]} x_i) \le h_L(\bigvee\limits_{i \in [m]} x_i) \le h_L(I) = n.
	\end{equation}
	Since $h$ is positive isotone, it is evident from \eqref{E0} that $m = n$ if and only if $I = \bigvee\limits_{i \in [m]} x_i$ and $h_M(\bigvee\limits_{i \in [m]} x_i) = h_L(\bigvee\limits_{i \in [m]} x_i)$. That $x_i$'s are atoms in $L$ for all $i \in [m]$ if and only if $x_1 \prec x_1 \vee x_2 \prec \cdots \prec \bigvee\limits_{j \in [m-1]} x_j \prec \bigvee\limits_{i \in [m]} x_i$ is a maximal chain in $L$ proves the rest.
\end{proof}

Any element in a geometric lattice can be expressed as a join of atoms (Definition~\ref{LD6}). However, such representation is not unique. To elaborate, if $\{x_1, \ldots, x_m\}$ is the set of atoms in a geometric lattice $L$ then there may exist $y \in L$ such that $y = \bigvee\limits_{i \in \mathcal{I} \subseteq [m]} x_i = \bigvee\limits_{j \in \mathcal{J} \subseteq [m]} x_j$ for two different subsets $\mathcal{I}, \mathcal{J} \subseteq [m]$. We will now show that representation of elements as join of atoms is unique if and only if the lattice is also distributive. In fact we will prove the following which is a more generalized statement.
\begin{theorem}[Unique-decomposition Theorem]
	\label{T1}
	A finite atomistic lattice is distributive if and only if it is uniquely atomistic.
\end{theorem}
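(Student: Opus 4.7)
The plan is to prove the two directions separately, exploiting the machinery already in place: Lemma~\ref{L1} (meet of distinct atoms is $O$), Theorem~\ref{TL1} (Dedekind--Birkhoff), and the structural lemmas about uniquely atomistic lattices (Lemmas~\ref{UAL1}--\ref{UAL3} and Theorem~\ref{UAT}).

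For the forward direction, assume $L$ is finite, atomistic, and distributive, with atoms $\{x_1,\ldots,x_m\}$. Atomisticity guarantees at least one expression $y=\bigvee_{i\in\mathcal{I}}x_i$ for each $y\in L$; I must rule out a second one $y=\bigvee_{j\in\mathcal{J}}x_j$ with $\mathcal{I}\neq\mathcal{J}$. Without loss of generality pick $k\in\mathcal{I}\setminus\mathcal{J}$. Since $x_k\preceq y$, we have $x_k=x_k\wedge y=x_k\wedge\bigvee_{j\in\mathcal{J}}x_j$. Applying the distributive law (iteratively, as in the proof of Lemma~\ref{L2}) yields $x_k=\bigvee_{j\in\mathcal{J}}(x_k\wedge x_j)$, and by Lemma~\ref{L1} every term equals $O$ because $k\notin\mathcal{J}$ ensures $x_k\neq x_j$. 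Hence $x_k=O$, contradicting that $x_k$ is an atom. Thus the expression is unique.

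For the converse, assume $L$ is finite and uniquely atomistic. By Theorem~\ref{UAT}, $L$ is already modular, so by Theorem~\ref{TL1} it suffices to show that $L$ contains no sublattice isomorphic to $M_3$. Suppose, for contradiction, that $L$ does contain an $M_3$ with middle elements $a,b,c$, bottom $\mu$, and top $\mathcal{M}$. Using unique atomisticity, write $a=\sup A$, $b=\sup B$, $c=\sup C$, $\mu=\sup U$, $\mathcal{M}=\sup M$. The defining relations of $M_3$ together with Lemmas~\ref{UAL1} and \ref{UAL2} force
\begin{equation*}
A\cup B=A\cup C=B\cup C=M,\qquad A\cap B=A\cap C=B\cap C=U.
\end{equation*}
The set-theoretic identity $X=(X\cap A)\cup(X\setminus A)$ combined with $A\cup B=A\cup C$ gives $B\setminus A=C\setminus A$, and $A\cap B=A\cap C$ gives the other half; together $B=C$, whence $b=c$, contradicting that the middle elements of $M_3$ are distinct. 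Therefore no $M_3$ sublattice exists, and $L$ is distributive.

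The main obstacle, if any, is the converse direction: one must be careful not to assume that the middle elements of the hypothetical $M_3$ are themselves atoms of $L$ (in general they are arbitrary lattice elements whose atomistic decompositions $A,B,C$ can be large). The key observation that makes the argument work is that in a uniquely atomistic lattice the join/meet of general elements is perfectly mirrored by union/intersection of their atom-index sets (Lemmas~\ref{UAL1} and \ref{UAL2}), which reduces the impossibility of an embedded $M_3$ to the elementary set-theoretic fact that $A\cup B=A\cup C$ together with $A\cap B=A\cap C$ implies $B=C$.
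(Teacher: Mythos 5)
Your proposal is correct and follows essentially the same route as the paper: the forward direction picks an atom $x_k$ in $\mathcal{I}\setminus\mathcal{J}$ and uses distributivity with Lemmas~\ref{L1}--\ref{L2} to force $x_k=O$, and the converse invokes Theorem~\ref{UAT} for modularity, Theorem~\ref{TL1} to reduce to excluding $M_3$, and Lemmas~\ref{UAL1}--\ref{UAL2} to translate the $M_3$ relations into the index-set identities. The only (immaterial) difference is the last step of the converse: you extract $B=C$ (contradicting distinctness of the middle elements) from those identities, whereas the paper computes $\mathcal{J}_1=\mathcal{J}_1\cap(\mathcal{J}_2\cup\mathcal{J}_3)=\mathcal{J}_1\cap\mathcal{J}_2$ to get $y_1\preceq y_2$ (contradicting incomparability).
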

\begin{proof}
	Let $M$ be an atomistic lattice with $\{x_1, \ldots, x_m\}$ as its set of all atoms. First we prove that $M$ is uniquely atomistic, i.e. $\bigvee\limits_{i \in \mathcal{I} \subseteq [m]} x_i \in M$ is uniquely determined by $\mathcal{I} \subseteq [m]$ when $M$ is distributive.
	
	The proof is by contradiction. Suppose the claim is false, i.e. there exist subsets $\mathcal{I}, \mathcal{J} \subseteq [m]$ such that $\mathcal{I} \ne \mathcal{J}$ and $\bigvee\limits_{i \in \mathcal{I}} x_i = \bigvee\limits_{j \in \mathcal{J}} x_j$. Since $\mathcal{I}$ and $\mathcal{J}$ are distinct, at least one of them is not contained within the other. Without loss of generality, suppose $\mathcal{I} \nsubseteq \mathcal{J}$. Then the difference set $\mathcal{I} \backslash \mathcal{J}$ is nonempty, i.e. there exists an integer $l \in \mathcal{I} \backslash \mathcal{J}$. Thus we can write as per our supposition:
	\begin{equation}
		\label{E1}
		x_l \wedge (\bigvee\limits_{i \in \mathcal{I}} x_i) = x_l \wedge (\bigvee\limits_{j \in \mathcal{J}} x_j).
	\end{equation}
	
	The individual terms can be decomposed further. Since $l \in \mathcal{I}$, by distributivity in $M$ we obtain $x_l \wedge (\bigvee\limits_{i \in \mathcal{I}} x_i) = (x_l \wedge (\bigvee\limits_{i \in \mathcal{I} \backslash \{l\}} x_i)) \bigvee (x_l \wedge x_l) = O \vee x_l = x_l$. The penultimate step follows from Lemma~\ref{L2}. Similar technique yields $x_l \wedge (\bigvee\limits_{j \in \mathcal{J}} x_j) = O$ as $l \notin \mathcal{J}$. Hence \eqref{E1} suggests $x_l = O$. However, this is a contradiction since $l \in \mathcal{I} \subseteq [m]$, i.e. $x_l$ is an atom. We conclude that our initial assumption was wrong and the representation $\bigvee\limits_{i \in \mathcal{I}} x_i$ is uniquely determined by $\mathcal{I}$.
	
	Now it remains to prove that $M$ is distributive if it is uniquely atomistic. Once again we proceed by contradiction. That $M$ is modular follows at once from Theorem~\ref{UAT}. Suppose $M$ is modular non-distributive. By Theorem~\ref{TL1} $M$ must contain a sublattice isomorphic to $M_3$. In other words there exist $y_1, y_2, y_3 \in M$ such that $y_1 \vee y_2 = y_2 \vee y_3 = y_3 \vee y_1$ and $y_1 \wedge y_2 = y_2 \wedge y_3 = y_3 \wedge y_1$, where $y_i \npreceq y_j$ for $i \neq j$ (See Fig.~\ref{F2}). Suppose $\mathcal{M} = y_1 \vee y_2$ and $\mathfrak{m} = y_1 \wedge y_2$. By imposition of unique atmosticity, $y_j = \bigvee\limits_{i \in \mathcal{J}_j} x_i$ for $j = 1, 2, 3$, where $\mathcal{J}_j \subseteq [m]$ uniquely determines $y_j$. This implies the following:
	\begin{equation}
		\label{E1a}
		\mathcal{M} = \bigvee\limits_{i \in \mathcal{J}_1 \cup \mathcal{J}_2} x_i = \bigvee\limits_{j \in \mathcal{J}_2 \cup \mathcal{J}_3} x_j = \bigvee\limits_{k \in \mathcal{J}_3 \cup \mathcal{J}_1} x_k.
	\end{equation}
	Since $M$ is uniquely atomistic, applying Lemma~\ref{UAL1} to \eqref{E1a} implies that $\mathcal{J}_1 \cup \mathcal{J}_2 = \mathcal{J}_2 \cup \mathcal{J}_3 = \mathcal{J}_3 \cup \mathcal{J}_1$.
	
	\begin{figure}
		\centering
		\begin{tikzpicture}[scale=0.7]
			\node (A1) at (0,2) {$\mathcal{M}$};
			\node (A2) at (-2,0) {$y_1$};
			\node (A3) at (0,0) {$y_2$};
			\node (A4) at (2,0) {$y_3$};
			\node (A5) at (0,-2) {$\mathfrak{m}$};
			\node[right=0pt of A1,inner xsep=0pt] {$= y_1 \vee y_2 = y_2 \vee y_3 = y_3 \vee y_1$};
			\node[right=0pt of A5,inner xsep=0pt] {$= y_1 \wedge y_2 = y_2 \wedge y_3 = y_3 \wedge y_1$};
			\draw (A5) -- (A2) -- (A1) -- (A3) -- (A5) -- (A4) -- (A1);
		\end{tikzpicture}
		\caption{$M_3$-sublattice in $M$}
		\label{F2}
	\end{figure}
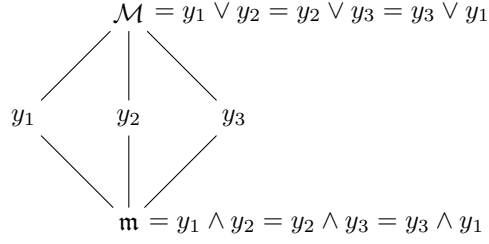
	
	On the other hand $\mathfrak{m} = y_1 \wedge y_2$, where $y_1 = \sup \mathcal{J}_1$ and $y_2 = \sup \mathcal{J}_2$. If $\mathcal{L} \subseteq [m]$ uniquely determines $\mathfrak{m}$, i.e. $\mathfrak{m} = \sup \mathcal{L}$, then Lemma~\ref{UAL2} implies that $\mathcal{L} = \mathcal{J}_1 \cap \mathcal{J}_2$. Similarly we can deduce for $y_2 \wedge y_3$ and $y_3 \wedge y_1$ which indicates that $\mathcal{J}_1 \cap \mathcal{J}_2 = \mathcal{J}_2 \cap \mathcal{J}_3 = \mathcal{J}_3 \cap \mathcal{J}_1$.
	
	We can now express $\mathcal{J}_1$ as $\mathcal{J}_1 = \mathcal{J}_1 \cap (\mathcal{J}_1 \cup \mathcal{J}_2) = \mathcal{J}_1 \cap (\mathcal{J}_2 \cup \mathcal{J}_3) = (\mathcal{J}_1 \cap \mathcal{J}_2) \cup (\mathcal{J}_1 \cap \mathcal{J}_3) = \mathcal{J}_1 \cap \mathcal{J}_2$, i.e. $\mathcal{J}_1 \subseteq \mathcal{J}_2$. This means $y_1 \preceq y_2$, which contradicts our initial assumption. Hence, $M$ is distributive.
\end{proof}

\begin{corollary}
	\label{UDT}
	A geometric lattice is distributive if and only if it is uniquely atomistic.
\end{corollary}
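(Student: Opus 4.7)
The plan is to observe that Corollary~\ref{UDT} is an immediate specialization of Theorem~\ref{T1}. By Definition~\ref{LD6}, a geometric lattice is a finite lattice that is both semimodular and atomistic, so in particular it is a finite atomistic lattice and thus satisfies the hypothesis of Theorem~\ref{T1}. There is no real obstacle: the work has already been done in the preceding section.

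Concretely, I would write a two-line proof. For the forward direction, suppose a geometric lattice $L$ is distributive; since $L$ is finite and atomistic, Theorem~\ref{T1} applied to $L$ gives that $L$ is uniquely atomistic. For the reverse direction, suppose the geometric lattice $L$ is uniquely atomistic; again invoking Theorem~\ref{T1} (in its other direction) on the finite atomistic lattice $L$ yields distributivity.

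It is worth remarking, as a sanity check, that the semimodularity condition built into the definition of a geometric lattice is not an active ingredient in this corollary. If $L$ is a finite atomistic lattice that is uniquely atomistic, then by Theorem~\ref{UAT} it is already modular, hence automatically semimodular; so the hypothesis ``geometric'' in the uniquely atomistic direction is weaker than it looks. The only purpose the geometric assumption serves in the statement is to make the corollary fit naturally into the subsequent applications, where the ambient lattice of interest (such as $\mathbb{P}_q(n)$) is known to be geometric. Accordingly, the proof I propose is simply: \emph{this follows from Theorem~\ref{T1}, since every geometric lattice is by Definition~\ref{LD6} a finite atomistic lattice.}
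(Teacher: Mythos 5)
Your proposal is correct and matches the paper's proof exactly: the corollary is deduced by noting that a geometric lattice is by definition a finite atomistic lattice, so Theorem~\ref{T1} applies directly. Your added observation that semimodularity plays no role is also made by the paper in the remark immediately following the corollary.
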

\begin{proof}
	A geometric lattice is finite atomistic by definition, which concludes the proof.
\end{proof}
\begin{remark}
	The semimodularity of a geometric lattice is not required for it to be distributive.
\end{remark}
The consequence of Corollary~\ref{UDT} is that any element of a geometric distributive lattice can be uniquely decomposed as join of its atoms. The next statement also follows from Theorem~\ref{T1}:
\begin{corollary}
	\label{P3}
	The greatest element in a geometric distributive lattice is the join of all of its atoms.
\end{corollary}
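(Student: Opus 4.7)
The plan is to apply Corollary~\ref{UDT} first: since $L$ is geometric and distributive, it is uniquely atomistic, so every element of $L$, including the greatest element $I$, has a unique decomposition as a join of atoms. Writing $\{x_1,\ldots,x_m\}$ for the set of all atoms of $L$, let $S_I \subseteq [m]$ be the unique subset with $I = \sup S_I = \bigvee_{i \in S_I} x_i$. The entire task reduces to showing $S_I = [m]$.

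To prove this, I would fix an arbitrary atom $x_j$ and exploit the maximality of $I$. Since $x_j \preceq I$, we have $x_j \vee I = I$. On the one hand $I = \sup S_I$; on the other hand, writing $x_j = \sup\{j\}$ and applying Lemma~\ref{UAL1}, we get $x_j \vee I = \sup(\{j\} \cup S_I)$. Unique atomisticity then forces $\{j\} \cup S_I = S_I$, which is precisely $j \in S_I$. Since $j \in [m]$ was arbitrary, every atom index lies in $S_I$, so $S_I = [m]$ and therefore $I = \bigvee_{i=1}^{m} x_i$, as desired.

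There is essentially no obstacle: once Corollary~\ref{UDT} is in hand, the argument is a one-line application of Lemma~\ref{UAL1} together with the uniqueness of the atom-decomposition. The only point one should briefly note is that the argument does not require any nontriviality assumption on $L$: if $L$ happens to have height one, then $I$ is itself the unique atom and the statement holds trivially under the same framework. Alternatively, one could appeal to Lemma~\ref{UAL3} applied to $x_j \prec I$ (when $x_j \neq I$) to obtain $\{j\} \subset S_I$ directly, which yields the same conclusion; I would prefer the Lemma~\ref{UAL1} version because it handles all cases uniformly without splitting on whether $x_j = I$.
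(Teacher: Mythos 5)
Your proposal is correct, and every step checks out: $x_j \vee I = I$ by maximality, Lemma~\ref{UAL1} gives $x_j \vee I = \sup(\{j\} \cup S_I)$, and unique atomisticity forces $\{j\} \cup S_I = S_I$, hence $S_I = [m]$. The route is genuinely different from the paper's, though. The paper never invokes uniqueness of the decomposition: it sets $g := \bigvee_{i \in [m]} x_i$, writes the greatest element $y$ as $\bigvee_{i \in \mathcal{I}} x_i$ for \emph{some} $\mathcal{I} \subseteq [m]$ (only atomisticity is needed here), and then observes via associativity that $g = \bigl(\bigvee_{j \in [m] \setminus \mathcal{I}} x_j\bigr) \vee y \succeq y$, so $y = g$ by maximality of $y$. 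That argument reveals that the corollary is really a fact about arbitrary finite atomistic lattices, with distributivity playing no role; your argument, by routing through Corollary~\ref{UDT} and the cancellation $\{j\} \cup S_I = S_I$, uses the distributive hypothesis in an essential-looking way even though it is not actually needed. What your version buys in exchange is a cleaner index-set bookkeeping (showing directly that every atom's index lies in $S_I$), and your closing observation that Lemma~\ref{UAL3} gives an alternative one-line derivation when $x_j \prec I$ is also sound. Both proofs are short and valid; the paper's is marginally more economical in hypotheses.
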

\begin{proof}
	Let $\{x_1, \ldots, x_m\}$ be the set of all atoms in a geometric distributive lattice $M$. We aim to show that the greatest element in $M$ is $g := \bigvee\limits_{i \in [m]} x_i$. To that end, say $y \in M$ is the greatest element in $M$. By Theorem~\ref{T1} we can write $y = \bigvee\limits_{i \in \mathcal{I}} x_i$ for some $\mathcal{I} \subseteq [m]$. However, by associativity of the join operation $\vee$ in $M$, $g$ can also be decomposed as $g = (\bigvee\limits_{j \in [m] \backslash \mathcal{I}} x_j) \vee (\bigvee\limits_{i \in \mathcal{I}} x_i) = (\bigvee\limits_{j \in [m] \backslash \mathcal{I}} x_j) \vee y$, which implies $y \preceq g$; thus the only possibility is $y = g$, which concludes the proof.
\end{proof}
In the following section we will see a few applications of the Unique-decomposition theorem, mainly in the context of linearity and complements in $\mathbb{P}_q(n)$.

\section{Applications of the Unique-decomposition Theorem}
\label{S4}
The unique-decomposition theorem, akin to unique decomposition in context of linear subspace codes \cite[Proposition~12]{BK2}, lays the path for determining the maximum size of a distributive sublattice of a finite geometric lattice. We also characterize the extremal case.
\begin{theorem}
	\label{T2}
	The size of any distributive sublattice of a finite geometric lattice of height $n$ can be at most $2^n$. The bound is reached if and only if each of the atoms in the sublattice is also an atom in the geometric lattice and the greatest element of the geometric lattice belongs to the distributive sublattice.
\end{theorem}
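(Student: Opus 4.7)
The plan is to split the upper bound into the estimate $h(M) \leq n$ (immediate, since any chain in $M$ is a chain in $L$ and $h_L(I_L) = n$) and the general bound $|M| \leq 2^{h(M)}$ for any finite distributive lattice $M$; combining these yields $|M| \leq 2^n$.

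For the bound $|M| \leq 2^{h(M)}$, I would induct on $h(M)$, the base case being trivial. For the step, pick any atom $a$ of $M$. Since distributive implies modular, Theorem~\ref{TL3} ensures that $h$ is a positive isotone valuation on $M$, so the interval $[a, I_M]$ is a distributive sublattice of height $h(M) - 1$, giving $|[a, I_M]| \leq 2^{h(M)-1}$ by induction. For any $y \in M$ with $a \not\preceq y$, the argument of Lemma~\ref{L1} (using that $a$ is an atom and $y \wedge a \preceq a$) forces $y \wedge a = O$, and clearly $y \vee a \in [a, I_M]$. The map $\varphi : y \mapsto y \vee a$ is injective on $\{y \in M : a \not\preceq y\}$: if $y_1 \vee a = y_2 \vee a$, then by absorption and distributivity
\begin{equation*}
y_i \;=\; y_i \wedge (y_i \vee a) \;=\; y_i \wedge (y_j \vee a) \;=\; (y_i \wedge y_j) \vee (y_i \wedge a) \;=\; y_i \wedge y_j,
\end{equation*}
forcing $y_1 = y_2$. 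Hence $|\{y \in M : a \not\preceq y\}| \leq |[a, I_M]| \leq 2^{h(M)-1}$, so $|M| \leq 2^{h(M)}$.

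For the equality case, $|M| = 2^n$ forces both $h(M) = n$ and $|M| = 2^{h(M)}$; tracing equality through the induction forces $M$ to be atomistic with exactly $h(M) = n$ atoms (and in fact Boolean of rank $n$). The equality part of Proposition~\ref{P2} then yields that each atom of $M$ is an atom of $L$ and that $I_L$ equals their join, placing $I_L \in M$. Conversely, if the atoms of $M$ are atoms of $L$ and $I_L$ equals their join (so $I_L \in M$), Proposition~\ref{P2} gives that $M$ has $n$ atoms; by Theorem~\ref{T1}, the sublattice of $M$ generated by these $n$ atoms is uniquely atomistic of size $2^n$. Combined with the upper bound, this forces $|M| = 2^n$.

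The main obstacle will be the inductive bound $|M| \leq 2^{h(M)}$. The key device, the injection $\varphi : y \mapsto y \vee a$, depends crucially on the complement-uniqueness property of distributive lattices; without distributivity, no such argument is available, which is exactly why the bound cannot be extended to modular non-distributive sublattices (as witnessed by $M_3$, where three coatoms all collapse to the same $\vee a$).
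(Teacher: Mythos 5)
Your proof is correct, and it takes a genuinely different route from the paper's. The paper constructs the bijection $\Phi: \mathcal{P}(m) \to M$, $\mathcal{I} \mapsto \bigvee_{i\in\mathcal{I}} x_i$, using the unique-decomposition theorem to get $|M| = 2^m$ exactly, and then bounds the number of atoms by $m \le n$ via Proposition~\ref{P2}; you instead prove the purely lattice-theoretic bound $|M| \le 2^{h(M)}$ for every finite distributive lattice by induction on height, using the doubling injection $y \mapsto y \vee a$ into the filter $[a, I_M]$, and only then invoke the trivial estimate $h(M) \le n$. Your route buys something real: the paper's $\Phi$ is surjective only if $M$ is atomistic (the proof asserts ``by supposition $M$ is geometric''), but a distributive sublattice of a geometric lattice need not be atomistic --- the three-element chain $\{0\} \subset \langle\{(1,0)\}\rangle \subset \mathbb{F}_2^2$ is a distributive sublattice of $\mathbb{P}_2(2)$ with $|M| = 3$, not a power of $2$ --- so your induction covers cases the bijection argument silently excludes. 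The same example shows your reading of the equality condition is the right one: you require $I_L$ to be the \emph{join of the atoms of $M$} (equivalently, that $M$ contain $n$ atoms of $L$, as in Corollary~\ref{C1}), whereas the chain satisfies the literal condition ``each atom of $M$ is an atom of $L$ and $I_L \in M$'' without attaining the bound. The only under-detailed spot is ``tracing equality through the induction'': to see that $|M| = 2^{h(M)}$ forces $M$ to have $h(M)$ atoms, strengthen the inductive hypothesis to ``equality implies Boolean of rank $h(M)$'' and note that, by modularity, each preimage $\varphi^{-1}(b)$ of an atom $b$ of $[a, I_M]$ satisfies $O = a \wedge \varphi^{-1}(b) \lessdot \varphi^{-1}(b)$, so these preimages together with $a$ supply the required $h(M)$ distinct atoms; this is standard but should be written out.
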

\begin{proof}
	Suppose $M$ is a distributive sublattice of a finite geometric lattice $L$ with the height function $h$ defined on $L$ such that $h(I) = n$, where $I$ is the greatest element of $L$. We require to show that $|M| \le 2^n$.
	
	By supposition $M$ is geometric. Let $\{x_1, \ldots, x_m\}$ be the set of all atoms in $M$. We define a mapping $\Phi$ from $\mathcal{P}(m)$ to $M$ as below:
	\begin{eqnarray*}
		\Phi : \mathcal{P}(m) &\longrightarrow& M \nonumber \\
		\mathcal{I} &\mapsto& \bigvee\limits_{i \in \mathcal{I}} x_i.
	\end{eqnarray*}
	By definition the map $\Phi$ is well-defined. Suppose there exist $\mathcal{I}, \mathcal{J} \in \mathcal{P}(m)$ such that $\Phi(\mathcal{I}) = \Phi(\mathcal{J})$, i.e., $\bigvee\limits_{i \in \mathcal{I}} x_i = \bigvee\limits_{j \in \mathcal{J}} x_j$. As $M$ is geometric distributive, Corollary~\ref{UDT} dictates that $\mathcal{I} = \mathcal{J}$; thus $\Phi$ is injective. To check that $\Phi$ is also surjective, any $y \in M$ can be expressed as $y = \bigvee\limits_{i \in \mathcal{S} \subseteq [m]} x_i$ for some $\mathcal{S} \subseteq [m]$ as $M$ is geometric; the choice of $\mathcal{S}$ is unique according to Corollary~\ref{UDT}, hence $y = \Phi(\mathcal{S})$. Therefore $\Phi$ is a bijective map, which implies that $|M| = |\mathcal{P}(m)| = 2^m$. Combining this with Proposition~\ref{P2} yields $|M| \le 2^n$.
	
	For the extremal case we must have $m = n$. The rest then follows from Proposition~\ref{P2}.
\end{proof}
\begin{remark}
	An atom in a sublattice $M$ of a geometric lattice $L$ may not be an atom in $L$. E.g. consider the lattice $\mathcal{P}(4)$, the set of all subsets of $\{1, 2, 3, 4\}$. It is a geometric lattice with atoms $\{1\}, \{2\}, \{3\}$ and $\{4\}$. The sublattice $M = \{\phi, \{1, 2\}, \{3, 4\}, \{1, 2, 3, 4\}\}$ of $\mathcal{P}(4)$ has atoms $\{1, 2\}$ and $\{3, 4\}$, none of which is an atom in $\mathcal{P}(4)$. On the other hand, a proper sublattice of a geometric lattice $L$ does not contain all atoms of $L$.
\end{remark}
\begin{remark}
	Irrespective of the choice of the lattice $M$, the mapping $\Phi$ in Theorem~\ref{T2} always maps the ground set $[m]$ to $I$ and the empty subset $\phi$ to $O$.
\end{remark}
Class of distributive sublattices of maximum size in a finite geometric lattice can be characterized in an alternative way.
\begin{corollary}
	\label{C1}
	The size of a distributive sublattice $M$ of a finite geometric lattice $L$ of height $n$ is $2^n$ if and only if $M$ contains $n$ atoms of $L$.
\end{corollary}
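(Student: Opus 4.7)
The plan is to read Corollary~\ref{C1} as a cleaner reformulation of Theorem~\ref{T2}: I want to show that the two conditions in Theorem~\ref{T2}---every atom of $M$ is an atom of $L$, and $I \in M$---collapse into the single condition that $M$ contains $n$ atoms of $L$.

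For the forward implication, suppose $|M| = 2^n$. The bijection $\Phi$ constructed in the proof of Theorem~\ref{T2} yields $|M| = 2^m$, where $m$ is the number of atoms of $M$, forcing $m = n$. Theorem~\ref{T2} additionally guarantees that each of these $m$ atoms of $M$ is an atom of $L$, so $M$ contains $n$ atoms of $L$, as required.

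For the reverse direction, suppose atoms $x_1, \ldots, x_n$ of $L$ lie in $M$. First I would argue that these remain atoms after restriction to $M$: by Lemma~\ref{L1}, the pairwise meets $x_i \wedge x_j$ all equal $O_L$, so $O_L \in M$ by closure under meet and serves as the least element of $M$, whence each $x_i$ covers $O_M = O_L$ in $M$ and is therefore an atom of $M$. Next I would show $I \in M$: the join $\bigvee_{i=1}^{n} x_i$ lies in $M$ by closure, and by Lemma~\ref{L3a} applied within the distributive lattice $M$ it has height $n$; since $h_L(y) \ge h_M(y)$ for every $y \in M$ and $h_L(I) = n$ with $h_L$ positive isotone, the element $\bigvee_{i=1}^{n} x_i$ must coincide with $I$. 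Finally, Proposition~\ref{P2} caps the number of atoms of $M$ at $n$, so $\{x_1, \ldots, x_n\}$ exhausts the atom set of $M$. Both conditions of Theorem~\ref{T2} now hold, giving $|M| = 2^n$.

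I do not anticipate any serious obstacle: the entire proof is assembled from Theorem~\ref{T2}, Proposition~\ref{P2}, Lemma~\ref{L1}, and Lemma~\ref{L3a}. The mildest subtlety is verifying that $O_M$ coincides with $O_L$ so that atoms of $L$ sitting in $M$ are genuinely atoms of $M$; for $n \ge 2$ this is immediate from Lemma~\ref{L1}, while the degenerate $n \le 1$ cases are handled by the tacit convention, already in force in the proof of Theorem~\ref{T2}, that $\Phi(\emptyset) = O_L \in M$.
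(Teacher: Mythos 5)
Your proof is correct and follows essentially the same route as the paper's: both directions are assembled from Theorem~\ref{T2}, Proposition~\ref{P2}, and Lemma~\ref{L3a} (the paper additionally invokes Corollary~\ref{P3} where you use a height comparison). You are somewhat more careful than the paper in the reverse direction, explicitly checking that atoms of $L$ lying in $M$ are atoms of $M$ and that they exhaust the atom set of $M$ — details the paper leaves implicit — but the underlying argument is the same.
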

\begin{proof}
	By Proposition~\ref{P2} $M$ can have at most $n$ atoms, and for the extremal case all of them belong to $L$. Since $M$ is geometric distributive, proof technique of Theorem~\ref{T2} suggests that $|M| = |\mathcal{P}(n)| = 2^n$.
	
	Conversely, if $|M| = 2^n$ then according to Theorem~\ref{T2} each atom of $M$ is also an atom in $L$ and $I \in M$. Suppose $\{x_1, \ldots, x_m\}$ is the set of all atoms in $M$; thus $h_L(x_i) = 1$ for all $i \in [m]$. By Corollary~\ref{P3}, $\bigvee\limits_{i \in [m]} x_i$ is the greatest element in $M$, which implies that $I = \bigvee\limits_{i \in [m]} x_i$. By Lemma~\ref{L3a}, $m = \sum\limits_{i \in [m]}h_L(x_i) = h_L(\bigvee\limits_{i \in [m]} x_i) = n$, which settles the proof.
\end{proof}
\begin{proof}[Proof of Theorem~\ref{LT3}]
	The upper bound follows at once from Theorem~\ref{LT2} and Theorem~\ref{T2}. Corollary~\ref{C1} implies that the maximal case occurs if and only if the number of one-dimensional codewords is $n$, i.e. the code is derived from a fixed basis.
\end{proof}
We now consider the Whitney numbers of a geometric distributive lattice.
\begin{corollary}
	\label{C2}
	The Whitney numbers of a distributive sublattice $L$ of a finite geometric lattice of height $n$ are bounded by $W_k(L) \le \binom{n}{k}$ for all $k \in \{0, 1, \ldots, n\}$. Equality occurs if and only if $L$ contains $n$ atoms.
\end{corollary}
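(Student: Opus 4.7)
The plan is to invoke the unique-decomposition theorem to biject the elements of $L$ with subsets of its atom set, and then apply Lemma~\ref{L3a} to compute the height of each element as the size of the indexing subset. This reduces the counting problem to a purely combinatorial one involving binomial coefficients.

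First, let $\{x_1, \ldots, x_m\}$ denote the set of atoms of the distributive sublattice $L$. Corollary~\ref{UDT} furnishes, for every $y \in L$, a unique subset $\mathcal{I}_y \subseteq [m]$ with $y = \bigvee_{i \in \mathcal{I}_y} x_i$. Since each $x_i$ is an atom in $L$ we have $h_L(x_i) = 1$, and Lemma~\ref{L3a} (applicable because $L$ is distributive) then gives $h_L(y) = \sum_{i \in \mathcal{I}_y} h_L(x_i) = |\mathcal{I}_y|$. Counting the elements of $L$ of height $k$ therefore reduces to counting subsets of $[m]$ of cardinality $k$, yielding $W_k(L) = \binom{m}{k}$.

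Proposition~\ref{P2} bounds $m$ by $n$, so $W_k(L) = \binom{m}{k} \le \binom{n}{k}$ for every $k \in \{0, 1, \ldots, n\}$, which is the desired inequality. For the equality statement, the identity $\binom{m}{k} = \binom{n}{k}$ (for all such $k$, or even a single $k$ with $1 \le k \le n$) forces $m = n$ by strict monotonicity of $\binom{\cdot}{k}$ on integers at least $k$. By Corollary~\ref{C1}, the condition $m = n$ is precisely the condition that $L$ contains $n$ atoms of the ambient geometric lattice; the converse direction is immediate, since $m = n$ substituted into $W_k(L) = \binom{m}{k}$ yields equality throughout. No step poses a real obstacle: the corollary is essentially bookkeeping once the unique-decomposition theorem and Lemma~\ref{L3a} are in hand.
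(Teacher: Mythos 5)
Your proposal is correct and follows essentially the same route as the paper: use the unique decomposition (Corollary~\ref{UDT}) to biject $L$ with $\mathcal{P}(m)$, apply Lemma~\ref{L3a} to identify the height of $\bigvee_{i\in\mathcal{I}}x_i$ with $|\mathcal{I}|$ so that $W_k(L)=\binom{m}{k}$, and then bound $m\le n$ via Proposition~\ref{P2}. Your added remarks on the equality case (strict monotonicity of $\binom{\cdot}{k}$ and the caveat about $k=0$) are slightly more explicit than the paper's one-line conclusion but do not change the argument.
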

\begin{proof}
	Let the set of atoms in $L$ be $\{x_1, \ldots, x_m\}$. The case of $k = 0$ is obvious since $h(O) = 0$. As  shown in proof of Theorem~\ref{T2}, there exists a bijection from $L$ to $\mathcal{P}(m)$ that sends $\bigvee_{i \in \mathcal{I}} x_i$ to $\mathcal{I} \subseteq [m]$. From Lemma~\ref{L3a} it follows that $W_k(L) = \binom{m}{k}$ for all $k \in [m]$. As $m \le n$ by Proposition~\ref{P2}, the result follows. The bound is achieved if and only if $m = n$.
\end{proof}
\begin{remark}
	The statement in Corollary~\ref{C2} is similar in nature to the main result in Pai and Rajan's paper \cite[Theorem~2]{PS}.
\end{remark}
In the sequel we will consider the linear lattice $\mathbb{P}_q(n)$ instead of a geometric lattice in general. $\mathbb{P}_q(n)$ is non-distributive geometric. There remain a few unanswered questions regarding linearity and complements in $\mathbb{P}_q(n)$ that can be resolved by applying the unique-decomposition theorem. It was shown before that a linaer code in $\mathbb{P}_q(n)$ that is closed under intersection necessarily is a distributive sublattice of the geometric lattice $\mathbb{P}_q(n)$ \cite{BK}. Using the unique-decomposition theorem we now prove the converse.
\begin{theorem}
	\label{T3}
	A subset $\mathcal{U} \subseteq \mathbb{P}_q(n)$ is a distributive sublattice of the corresponding linear lattice $\mathbb{P}_q(n)$ if and only if $\mathcal{U}$ is a linear code closed under intersection.
\end{theorem}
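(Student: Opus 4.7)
The ``if'' direction is Theorem~\ref{LT2}. So the plan is to prove the converse: every (geometric) distributive sublattice $\mathcal{U} \subseteq \mathbb{P}_q(n)$ can be realized as a code derived from a partition of a linearly independent set in the sense of Theorem~\ref{LT1}, and is therefore a linear code closed under intersection.

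Let $\{X_1, \ldots, X_m\}$ denote the set of atoms of $\mathcal{U}$. Because the join and meet in $\mathcal{U}$ are exactly the ambient $+$ and $\cap$ inherited from $\mathbb{P}_q(n)$, Corollary~\ref{UDT} supplies a unique decomposition $Y = \sum_{i \in \mathcal{I}_Y} X_i$ for every $Y \in \mathcal{U}$, where $\mathcal{I}_Y \subseteq [m]$ is uniquely determined. Applying Lemma~\ref{L2} within the distributive lattice $\mathcal{U}$,
\[
X_i \cap \bigl(\sum_{j \in [m] \backslash \{i\}} X_j\bigr) = \{0\} \qquad \text{for every } i \in [m].
\]
This is precisely the standard vector-space criterion for the atoms $X_1, \ldots, X_m$ to sit in direct-sum position inside $\mathbb{F}_q^n$. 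Choose a basis $\mathcal{E}_i$ of each $X_i$ and set $\mathcal{E} := \bigcup_{i=1}^{m} \mathcal{E}_i$; the direct-sum property forces $\mathcal{E}$ to be linearly independent over $\mathbb{F}_q$, while $\{\mathcal{E}_1, \ldots, \mathcal{E}_m\}$ is by construction a partition of $\mathcal{E}$.

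For every $\mathcal{I} \subseteq [m]$ we then have $\sum_{i \in \mathcal{I}} X_i = \langle \mathcal{E}_{\mathcal{I}}\rangle$, so $\mathcal{U} = \{\langle \mathcal{E}_{\mathcal{I}}\rangle : \mathcal{I} \subseteq [m]\}$ exactly matches the construction in Theorem~\ref{LT1}, which then certifies $\mathcal{U}$ as a linear code closed under intersection. The main point requiring care is the passage from Lemma~\ref{L2} to global direct-summability: the lemma supplies only the ``each atom meets the sum of the others trivially'' form of independence, and one has to invoke the standard vector-space equivalence of this condition with the union of bases $\mathcal{E}$ being linearly independent. A secondary subtlety worth flagging is that the atoms of $\mathcal{U}$ need not be atoms of $\mathbb{P}_q(n)$---they may have dimension greater than $1$---so each block $\mathcal{E}_i$ of the partition may contain more than one basis vector, which is perfectly consistent with the general form of Theorem~\ref{LT1}.
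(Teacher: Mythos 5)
Your proposal follows essentially the same route as the paper's proof: both directions are handled identically, with the converse of Theorem~\ref{LT2} established by taking the atoms $X_1,\ldots,X_m$ of $\mathcal{U}$, invoking Lemma~\ref{L2} to get $X_i \cap \bigl(\sum_{j \ne i} X_j\bigr) = \{0\}$, using the unique-decomposition theorem to write every element as $\sum_{i\in\mathcal{I}}X_i$, and assembling bases of the atoms into a partitioned linearly independent set so that Theorem~\ref{LT1} applies. The two subtleties you flag (passing from the meet condition to global direct-summability, and atoms of $\mathcal{U}$ having dimension greater than one) are handled the same way, implicitly, in the paper.
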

\begin{proof}
	Suppose $\mathcal{U}$ is a distributive sublattice of $\mathbb{P}_q(n)$ and $\{X_1, \ldots, X_m\}$ is the set of all atoms in $\mathcal{U}$. Obviously $\mathcal{U}$ is geometric distributive, which according to Lemma~\ref{L2} implies that
	\begin{equation}
		\label{E3}
		X_i \cap (\sum_{j \in [m] \backslash \{i\}} X_j) = \{0\}, \qquad \forall i \in [m].
	\end{equation}
	Applying the unique-decomposition theorem it is easy to see that any $Y \in \mathcal{U}$ can be uniquely expressed as $Y = \sum\limits_{i \in \mathcal{I}} X_i$ for some fixed $\mathcal{I} \subseteq [m]$. If we choose arbitrary bases $B_i$ that span $X_i$ over $\mathbb{F}_q$ for all $i \in [m]$ then \eqref{E3} implies that the set $\{B_1, \ldots, B_m\}$ is a partition of $B := \bigcup_{i=1}^{m} B_i$, a linearly independent subset of $\mathbb{F}_q^n$ over $\mathbb{F}_q$. Expressing any $Y = \sum\limits_{i \in \mathcal{I}} X_i$ as $Y = \langle B_{\mathcal{I}}\rangle$ where $B_{\mathcal{I}} := \cup_{i \in \mathcal{I}} B_i$, we can say by Theorem~\ref{LT1} that $\mathcal{U} = \{\langle B_{\mathcal{I}}\rangle: \mathcal{I} \subseteq [m]\}$ is a linear code closed under intersection with linear addition $\boxplus$ of two codewords $Y_1 = \sum\limits_{j \in \mathcal{I}_1} X_j$ and $Y_2 = \sum\limits_{l \in \mathcal{I}_2} X_l$ defined as:
	\begin{equation*}
		Y_1 \boxplus Y_2 := \langle B_{\mathcal{I}_1 \triangle \mathcal{I}_2}\rangle = \sum\limits_{j \in \mathcal{I}_1 \triangle \mathcal{I}_2} X_j.
	\end{equation*}
	
	The converse is basically the statement of Theorem~\ref{LT2}.
\end{proof}
The maximum size of a subset of $\mathbb{P}_q(n)$ wherein a complement function can be defined is hitherto unknown. We next investigate any such subset of $\mathbb{P}_q(n)$ that has a distributive sublattice structure.
\begin{theorem}
	\label{T4}
	A subset $\mathcal{U} \subseteq \mathbb{P}_q(n)$ is a distributive sublattice of the linear lattice $\mathbb{P}_q(n)$ with a complement function defined on $\mathcal{U}$ if and only if $\mathcal{U}$ is a linear code closed under intersection with $\mathbb{F}_q^n \in \mathcal{U}$.
\end{theorem}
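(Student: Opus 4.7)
The statement splits into two directions, and in both directions the work reduces to invoking Theorem~\ref{T3} together with Definition~\ref{D1}. The plan is to handle the easy forward direction first, then construct the complement explicitly in the backward direction using the partition-of-a-basis structure supplied by Theorem~\ref{LT1}.

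For the forward direction, assume $\mathcal{U}$ is a distributive sublattice of $\mathbb{P}_q(n)$ that admits a complement $f$. By Theorem~\ref{T3}, $\mathcal{U}$ is a linear code closed under intersection; in particular Definition~\ref{D2}(i) forces $\{0\} \in \mathcal{U}$. Applying Definition~\ref{D1}(i) to $X = \{0\}$ yields $\{0\} \oplus f(\{0\}) = \mathbb{F}_q^n$, which forces $f(\{0\}) = \mathbb{F}_q^n$, and hence $\mathbb{F}_q^n \in \mathcal{U}$.

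For the backward direction, assume $\mathcal{U}$ is a linear code closed under intersection with $\mathbb{F}_q^n \in \mathcal{U}$. By Theorem~\ref{T3} again, $\mathcal{U}$ is a distributive sublattice of $\mathbb{P}_q(n)$, and the construction in the proof of Theorem~\ref{T3} (via Theorem~\ref{LT1}) exhibits a partition $\{B_1,\ldots,B_m\}$ of a linearly independent set $B = \bigcup_{i=1}^{m} B_i \subseteq \mathbb{F}_q^n$ such that $\mathcal{U} = \{\langle B_{\mathcal{I}}\rangle : \mathcal{I} \subseteq [m]\}$. Because $\mathbb{F}_q^n \in \mathcal{U}$ and the greatest element of $\mathcal{U}$ is $\langle B_{[m]}\rangle = \langle B\rangle$, the set $B$ must span $\mathbb{F}_q^n$ and is therefore a basis of $\mathbb{F}_q^n$. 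Define
\begin{equation*}
    f(\langle B_{\mathcal{I}}\rangle) := \langle B_{[m] \setminus \mathcal{I}}\rangle \qquad \text{for all } \mathcal{I} \subseteq [m].
\end{equation*}
This is well-defined by the unique-decomposition theorem (Theorem~\ref{T1}), which guarantees that the subset $\mathcal{I}\subseteq[m]$ is uniquely recoverable from $\langle B_{\mathcal{I}}\rangle \in \mathcal{U}$.

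It then remains to verify conditions (i)--(iv) of Definition~\ref{D1} for $f$; this is where most of the routine work sits but none of it is hard. Linear independence of $B$ gives $\langle B_{\mathcal{I}}\rangle \cap \langle B_{[m]\setminus\mathcal{I}}\rangle = \{0\}$ and $\langle B_{\mathcal{I}}\rangle + \langle B_{[m]\setminus\mathcal{I}}\rangle = \langle B\rangle = \mathbb{F}_q^n$, giving (i). Dimension counting $\dim\langle B_{\mathcal{I}}\rangle = \sum_{i\in\mathcal{I}}|B_i|$ together with $|B| = n$ yields $\dim f(\langle B_{\mathcal{I}}\rangle) = n - \dim\langle B_{\mathcal{I}}\rangle$, giving (ii); uniqueness of $f(X)$ follows again from Theorem~\ref{T1}. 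Condition (iii) is the set-theoretic identity $[m]\setminus([m]\setminus\mathcal{I}) = \mathcal{I}$. For (iv), using closure under intersection and unique decomposition one shows $\langle B_{\mathcal{I}}\rangle + \langle B_{\mathcal{J}}\rangle = \langle B_{\mathcal{I}\cup\mathcal{J}}\rangle$ and $\langle B_{\mathcal{I}}\rangle \cap \langle B_{\mathcal{J}}\rangle = \langle B_{\mathcal{I}\cap\mathcal{J}}\rangle$, whence $d_S(\langle B_{\mathcal{I}}\rangle,\langle B_{\mathcal{J}}\rangle) = \sum_{i \in \mathcal{I}\triangle\mathcal{J}}|B_i|$; since $([m]\setminus\mathcal{I})\triangle([m]\setminus\mathcal{J}) = \mathcal{I}\triangle\mathcal{J}$, the distance is preserved. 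The only conceptual point to watch is recognizing that $\mathbb{F}_q^n \in \mathcal{U}$ upgrades the linearly independent set $B$ of Theorem~\ref{LT1} to a full basis of $\mathbb{F}_q^n$, which is precisely what makes the set-complement $\mathcal{I} \mapsto [m]\setminus\mathcal{I}$ implement a genuine vector-space complement inside $\mathbb{P}_q(n)$.
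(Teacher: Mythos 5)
Your proof is correct and follows essentially the same route as the paper: both directions reduce to Theorem~\ref{T3} (equivalently Theorem~\ref{LT2}), and your complement $\langle B_{\mathcal{I}}\rangle \mapsto \langle B_{[m]\setminus\mathcal{I}}\rangle$ is exactly the paper's map $f(X) := X \boxplus \mathbb{F}_q^n$ written out in the basis-partition coordinates of Theorem~\ref{LT1}. The only differences are cosmetic: in the forward direction you evaluate $f$ at $\{0\}$ where the paper applies Lemma~\ref{LL1} to a general $X$ and uses closure under $\boxplus$, and in the backward direction you verify conditions (i)--(iv) of Definition~\ref{D1} explicitly (correctly noting that $\mathbb{F}_q^n \in \mathcal{U}$ is what upgrades $B$ to a basis), a verification the paper leaves to the reader.
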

\begin{proof}
	Suppose $\mathcal{U}$ is a distributive sublattice of $\mathbb{P}_q(n)$ and $f : \mathcal{U} \rightarrow \mathcal{U}$ is a complement on $\mathcal{U}$. It follows directly from Theorem~\ref{T3} that $\mathcal{U}$ is a linear code closed under intersection. For any $X \in \mathcal{U}$, the direct sum of $X$ and $f(X)$ is $X \oplus f(X) = \mathbb{F}_q^n$. Since $X \cap f(X) = \{0\}$ by definition of $f$, it follows from Lemma~\ref{LL1} that $X \boxplus f(X) = X \oplus f(X) = \mathbb{F}_q^n \in \mathcal{U}$.
	
	Conversely, if $\mathcal{U}$ is a linear code closed under intersection with $\mathbb{F}_q^n \in \mathcal{U}$ then the function $f$ defined as $f(X) := X \boxplus \mathbb{F}_q^n$ for all $X \in \mathcal{U}$ serves as a complement function on $\mathcal{U}$. $\mathcal{U}$ is a distributive sublattice of $\mathbb{P}_q(n)$ according to Theorem~\ref{LT2}.
\end{proof}
\begin{corollary}
	\label{C3}
	The largest distributive sublattice of $\mathbb{P}_q(n)$ on which a complement function can be defined is a code derived from a fixed basis.
\end{corollary}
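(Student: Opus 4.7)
The plan is to obtain Corollary~\ref{C3} by directly chaining Theorem~\ref{T4} with Theorem~\ref{LT3}, using Theorem~\ref{T2} only implicitly through the size bound. Specifically, if $\mathcal{U} \subseteq \mathbb{P}_q(n)$ is a distributive sublattice of the linear lattice admitting a complement function, then Theorem~\ref{T4} immediately upgrades $\mathcal{U}$ to a linear code closed under intersection with $\mathbb{F}_q^n \in \mathcal{U}$. This is the key conceptual step; once we have it, size considerations are reduced to a question about linear codes closed under intersection.

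Next I would invoke Theorem~\ref{LT3}: the size of any linear code in $\mathbb{P}_q(n)$ closed under intersection is at most $2^n$, and this bound is attained if and only if the code is derived from a fixed basis. Combining with the previous paragraph, any distributive sublattice of $\mathbb{P}_q(n)$ admitting a complement has at most $2^n$ elements, with equality forcing the structure of a code derived from a fixed basis $\{e_1,\ldots,e_n\}$ of $\mathbb{F}_q^n$.

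For the reverse direction — to confirm the largest such sublattice actually is of this form — I would check that a code $\mathcal{U} = \{\langle \mathcal{E}_{\mathcal{I}}\rangle : \mathcal{I} \subseteq [n]\}$ derived from a fixed basis satisfies the hypotheses of Theorem~\ref{T4}. It is a linear code closed under intersection by Theorem~\ref{LT1}, and taking $\mathcal{I} = [n]$ gives $\langle \mathcal{E}_{[n]}\rangle = \mathbb{F}_q^n \in \mathcal{U}$. Thus Theorem~\ref{T4} applies in the converse direction and furnishes a distributive sublattice structure on which the complement $f(X) := X \boxplus \mathbb{F}_q^n$ is well defined. The two directions together pin down the extremal case uniquely.

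I do not anticipate a genuine obstacle here; the work has been done in Theorems~\ref{T4} and~\ref{LT3}, and the corollary is essentially their composition. The only point requiring a brief verification is that the parameter $r$ in Theorem~\ref{LT1} equals $n$ for the extremal case, so that $\mathcal{E}_{[n]}$ spans the whole ambient space $\mathbb{F}_q^n$ and hence $\mathbb{F}_q^n$ sits inside $\mathcal{U}$; this is precisely the defining property of a code derived from a fixed basis as introduced in \cite{PS}.
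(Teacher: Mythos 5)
Your proposal is correct and follows essentially the same route as the paper: apply Theorem~\ref{T4} to reduce to a linear code closed under intersection containing $\mathbb{F}_q^n$, then invoke Theorem~\ref{LT3} to identify the extremal case as a code derived from a fixed basis. Your additional verification that the fixed-basis code indeed contains $\mathbb{F}_q^n$ and hence admits a complement is a reasonable completeness check that the paper leaves implicit, but it does not change the argument.
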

\begin{proof}
	By Theorem~\ref{T4} a distributive sublattice of $\mathbb{P}_q(n)$ on which a complement function can be defined is a linear code closed under intersection containing $\mathbb{F}_q^n$. Theorem~\ref{LT3} indicates that the code has a maximum size if and only if it is derived from a fixed basis.
\end{proof}
The maximum size of a sublattice of $\mathbb{P}_q(n)$ wherein a complement function can be defined is, however, unknown. It needs to be investigated first whether a complement function can exist in a non-distributive sublattice of $\mathbb{P}_q(n)$.
\section{Conclusion}
\label{S5}
In this paper we have derived the unique criterion required for an atomistic lattice to be distributive and used that to characterize all finite geometric distributive lattices. Using the unique characterization we were able to prove that the size of a distributive sublattice of a finite geometric lattice of height $n$ is always upper bounded by $2^n$, which was conjectured in \cite{PS}. We also applied the characterization to the linear lattice $\mathbb{P}_q(n)$, which is geometric, to obtain certain results regarding linearity and complements in $\mathbb{P}_q(n)$.

Theorem~\ref{T3} states that any distributive sublattice of $\mathbb{P}_q(n)$ can be used to construct a linear code in $\mathbb{P}_q(n)$ that is closed under intersection. This result is similar to the fact that certain $d$-intersecting families in $\mathbb{G}_q(n, 2d)$ can always be used to construct \emph{equidistant} linear codes with constant distance $2d$ \cite{PB}. It might be interesting to find a more generalized statement that encapsulates the essence of both these results.

\section*{Acknowledgement}

The author would like to thank Prof. Navin Kashyap and Dr. Arijit Ghosh for their insightful comments.

\bibliographystyle{ieeetr}
\bibliography{CFGDL}

\end{document}